\newtheorem{claim}{}[section]
\newtheorem{theorem}[claim]{Theorem}
\newtheorem{proposition}[claim]{Proposition}
\newtheorem{corollary}[claim]{Corollary}
\newtheorem{example}[claim]{Example}
\renewenvironment{proof}{\noindent{\it Proof. \hskip0pt}}
                      {$\square$\par\medskip}
\newcounter{question}
\newenvironment{question}{\par\medskip\par\parindent=-3mm\
    \refstepcounter{question}
    \bf\sf{ Question \thequestion}  \parindent=3mm\rm \
                }{\par\medskip}
\begin{document}
\baselineskip 6.0 truemm
\parindent 1.5 true pc

\newcommand\lan{\langle}
\newcommand\ran{\rangle}
\newcommand\tr{{\text{\rm Tr}}\,}
\newcommand\ot{\otimes}
\newcommand\ol{\overline}
\newcommand\join{\vee}
\newcommand\meet{\wedge}
\renewcommand\ker{{\text{\rm Ker}}\,}
\newcommand\image{{\text{\rm Im}}\,}
\newcommand\id{{\text{\rm id}}}
\newcommand\Sym{{\text{\rm Sym}}}
\newcommand\tp{{\text{\rm tp}}}
\newcommand\pr{\prime}
\newcommand\e{\epsilon}
\newcommand\la{\lambda}
\newcommand\inte{{\text{\rm int}}\,}
\newcommand\ttt{{\text{\rm t}}}
\newcommand\spa{{\text{\rm span}}\,}
\newcommand\conv{{\text{\rm conv}}\,}
\newcommand\rank{\ {\text{\rm rank of}}\ }
\newcommand\re{{\text{\rm Re}}\,}
\newcommand\ppt{\mathbb T}
\newcommand\rk{{\text{\rm rank}}\,}
\newcommand\bcolor{\color{blue}}
\newcommand\ecolor{\color{black}}
\newcommand\sss{\omega}
\newcommand\sgn{{\text{\rm sgn}}\,}

\newcommand{\CC}{\mathbb{C}}
\def\Cdn{\bigotimes_{j=1}^n \CC^{d_j}}
\def\Otn{\bigotimes_{i=1}^n}

\newcommand{\bra}[1]{\langle{#1}|}
\newcommand{\ket}[1]{|{#1}\rangle}
\newcommand{\ZZ}{\mathbb{Z}}
\newcommand{\seq}[2]{ {#1}_1 ,{#1}_2 , \cdots, {#1}_{#2} }
\newcommand{\PP}{\mathbb{P}}
\def\cU{{\mathcal U}}
\newcommand{\RR}{\mathbb{R}}
\newcommand\Gr{{\text{\rm Gr}}\,}
\newcommand{\bk}{\mathbf{k}}
\newcommand{\bm}{\mathbf{m}}
\newcommand{\bv}{\mathbf{v}}
\newcommand{\bw}{\mathbf{w}}
\newcommand\cT{{\mathcal T}}
\newcommand{\be}{\mathbf{e}}
\def\pri{^{\prime}}
\newcommand\per{\text{\rm per}}
\newcommand{\braket}[2]{\langle{#1}|{#2}\rangle}
\def\orth{^{\perp}}
\def\ts{\otimes}
\newcommand{\floor}[1]{\left\lfloor{#1}\right\rfloor}

\newcommand{\del}[1]{\textcolor[rgb]{0.98,0.5,0}{\sout{#1}}}
\newcommand{\add}[1]{\textcolor[rgb]{1,0,0}{#1}}

\title{Product vectors in the ranges of multi-partite states with positive partial transposes and permanents of matrices}

\author{Young-Hoon Kiem}
\address{Department of Mathematics and Research Institute of Mathematics\\ Seoul National University\\ Seoul 151-747, Korea}
\email{kiem@snu.ac.kr}
\author{Seung-Hyeok Kye}
\address{Department of Mathematics and Research Institute of Mathematics\\ Seoul National University\\ Seoul 151-747, Korea}
\email{kye@snu.ac.kr}
\author{Joohan Na}
\address{School of Computational Sciences\\ Korea Institute for Advanced Study\\ Seoul 130-722, Korea}
\email{jhna@kias.re.kr}

\date{May 10, 2015}

\thanks{YHK and JN were partially supported by NRF grant 2011-0027969; SHK was partially supported by NRFK grant 2013-004942}

\subjclass[2010]{81P15, 14F45, 55N45, 15A30, 46L05}

\keywords{separable states, product vectors, entangled states with positive partial transposes, permanents}

\begin{abstract}
In this paper, we consider a system of homogeneous algebraic equations in complex variables and their conjugates which arise naturally from the range criterion for separability of PPT states. We examine systematically these equations to get sufficient conditions for the existence of nontrivial solutions. This gives us possible upper bounds of ranks of PPT entangled edge states and their partial transposes. We will focus on the multi-partite cases which are much more delicate than the bi-partite cases. We use the notion of permanents of matrices as well as techniques from algebraic geometry through the discussion.
\end{abstract}

\maketitle

\section{Introduction}\label{sec:intro}

Quantum entanglement is now considered as the main resource for quantum information and quantum computation, and distinguishing entanglement from separability is one of the most important problems in the theory. The most convenient and powerful criterion is the so called PPT criterion by Choi \cite{choi-ppt} and Peres \cite{peres}, which tells us that the partial transposes of a separable state are positive, that is, positive semi-definite. But, it is very difficult to determine if a given PPT state is separable or not, and it is actually known to be an $NP$-hard problem \cite{StrongNP, NPHard03}. In order to determine if a given PPT state is separable, it is natural to look at the ranges of the state and its partial transposes, as it had been suggested by the range criterion  \cite{p-horo}. Apart from the separability criterion, the notion of PPT is interesting in itself as well. For example, it is closely related with the question of distillability, which is one of the main open problems in quantum information theory. See \cite{horo-distill}.

Special kinds of PPT entangled states, PPT edge states, play an important role to understand the whole structures of PPT states, because every PPT entangled state is the sum of a separable state and a PPT edge state. See \cite{lew00}. Because PPT edge states have typically low ranks, it is very important to get upper bounds for possible ranks of PPT entangled edge states and their partial transposes. To do this, we need to consider a system of algebraic equations arising from the range criterion, as it was initiated in \cite{2xn} for the bi-partite $2\otimes n$ cases. The main purpose of this note is to provide a rigorous mathematical background to deal with these equations in general multi-partite cases,
using techniques from algebraic geometry.

A (mixed) state on the Hilbert space ${\mathcal H}=\Cdn$ is a positive semi-definite Hermitian operator of trace one. Throughout this note, we assume that a state always means a mixed state. A state on the Hilbert space ${\mathcal H}=\Cdn$ is said to be \emph{separable} if it is a convex combination of pure product states of the form
$$
\ket{\psi_1}\bra{\psi_1}\otimes \ket{\psi_2}\bra{\psi_2}\otimes\cdots \otimes \ket{\psi_n}\bra{\psi_n} \in M_{d_1}\ot M_{d_2}\ot \cdots \ot M_{d_n},
$$
where $M_d$ denotes the algebra of all $d\times d$ matrices over the field of complex numbers. A state is called \emph{entangled} if it is not separable. For a given subset $S$ of $[n]:=\{1,2,\cdots,n\}$, we define the \emph{partial transpose} $\left( \bigotimes_{j=1}^n A_j \right)^{T(S)}$ of $\bigotimes_{j=1}^n A_j$ by
$$
(A_1\ot A_2\ot\cdots\ot A_n)^{T(S)}:=B_1\ot B_2\ot\cdots\ot B_n, \quad \text{\rm with}\ B_j=\begin{cases} A_j^\ttt, &j\in S,\\ A_j, &j\notin S,\end{cases}
$$
and extend the map to the whole $\bigotimes_{j=1}^n M_{d_j}$ by linearity, where $A^\ttt$ denotes the transpose of the matrix $A$. We say that a state $\varrho$ is of PPT if its partial transpose $\varrho^{T(S)}$ is positive for every subset $S$ of $[n]$. It is easily checked that every separable state is of PPT, as it was observed by Choi \cite{choi-ppt} and Peres \cite{peres} for the bi-partite case $n=2$. We note that $\varrho^{T(S)}$ is positive if and only if $\varrho^{T(S^c)}$ is positive, where $S^c$ is the complement of $S$ in $[n]$. Therefore, it is enough to check the positivity of $2^{n-1}$ matrices among $2^n$ matrices, to confirm the PPT property of a given $n$-partite state.

For a subset $S$ of $[n]$ and a product vector $\ket\psi=\ket{\psi_1}\ot\cdots\ot\ket{\psi_n}$, we define the product vector $\ket\psi^{\Gamma(S)}$ up to constant by
\begin{equation}\label{par-conj}
(\ket{\psi_1}\ot\cdots\ot\ket{\psi_n})^{\Gamma(S)} :=\ket{\phi_1}\ot\cdots\ot\ket{\phi_n}, \quad \text{\rm with}\ \ket{\phi_j}=\begin{cases} \ket{\bar{\psi_j}}, &j\in S,\\ \ket{\psi_j}, &j\notin S.\end{cases}
\end{equation}
Note that the range of a density matrix $\sum_i|z_i\rangle\langle z_i|$ is given by the span of $\{|z_i\rangle\}$. Therefore, if a given PPT state $\varrho$ is separable then there exists a collection $\Psi$ of product vectors with the property \cite{p-horo} : The range of $\varrho^{T(S)}$ is the span of the product vectors $\{\ket\psi^{\Gamma(S)}:\ket\psi\in\Psi\}$ for each subset $S$ of $[n]$. Therefore, the first step to confirm separability of a PPT state $\varrho$ is to check the existence of a nonzero product vector $\ket\psi$ such that $\ket{\psi}^{\Gamma(S)}$ belongs to the range of $\varrho^{T(S)}$ for each subset $S$ of $[n]$. A PPT state $\varrho$ is said to be an edge state if there exist no such nonzero product vectors.

Suppose that we are given finite sequences $\{ S_1,S_2,\cdots S_r\}$ of subsets of $[n]$ and $\{ D_1,D_2\cdots, D_r\}$ of subspaces of $\mathcal H$. The purpose of this note is to investigate the system of equations
\begin{equation}\label{eq}
\ket\psi^{\Gamma(S_i)}\in D_i,\qquad i=1,2,\cdots, r,
\end{equation}
with unknowns $\ket\psi=\ket{\psi_1}\ot\cdots\ot\ket{\psi_n}$ in the product $\mathbb C\mathbb P^{d_1-1}\times \cdots\times \mathbb C\mathbb P^{d_n-1}$ of complex projective spaces, and find conditions for which the system of equations (\ref{eq}) has a nonzero solution, i.e. we are looking for nonzero solutions up to nonzero scalar multiplication. These equations have been considered in earlier papers \cite{hlvc}, \cite{PPT3qubit} and \cite{2xn}, for examples. See also \cite{kye-prod-vec} for the bi-partite case $n=2$.

To do this, it is convenient to define the $r\times n$ matrix $\Sigma=[\sigma_{ij}]$ with entries
$$
\sigma_{ij}=\begin{cases} -1, &j\in S_i,\\ +1, &j\notin
S_i,\end{cases}
$$
which will be called the {\sl associated matrix} of the sequence $\{S_1,S_2,\cdots S_r\}$. We note that the number $N_U$ of unknowns of the system of equations (\ref{eq}) is given by
$$
N_U=\sum_{j=1}^n(d_j-1),
$$
which is the complex dimension of the manifold $\mathbb C\mathbb P^{d_1-1}\times \cdots\times \mathbb C\mathbb P^{d_n-1}$. On the other hand, the number $N_E$ of algebraic equations in (\ref{eq}) is just $N_E = \sum_{i=1}^r\dim D_i^\perp$. Now, we are ready to state the main result of this note:

\begin{theorem}\label{main}
Let $\{S_1,\cdots,S_r\}$ be a sequence of subsets of $[n]$ with the associated matrix $\Sigma=[\sigma_{ij}]$. Then we have the following:
\begin{enumerate}
\item[{\rm (i)}]
If $N_E = \sum_{i=1}^r k_i > N_U$, then the system of equations {\rm (\ref{eq})} has no nonzero solution for generic subspaces $D_i$ of $\mathcal H$ with $k_i=\dim D_i^{\perp}$ for $i=1,2,\cdots,r$.
\item[{\rm (ii)}]
Suppose that $N_E=N_U$, and the coefficient of $\prod_j\alpha_j^{d_j-1}$ is nonzero when we expand the polynomial
\begin{equation}\label{criti}
{\prod_{i=1}^r(\sigma_{i,1} \alpha_1 + \sigma_{i,2} \alpha_2 + \cdots + \sigma_{i,n} \alpha_n)^{\dim D_i^\perp}}_.
\end{equation}
Then the system of equations {\rm (\ref{eq})} has a nonzero solution.
\item[{\rm (iii)}]
If $N_E < N_U$ and the associate matrix $\Sigma$ has rank $r$, then the system of equations {\rm (\ref{eq})} has infinitely many solutions.
\end{enumerate}
\end{theorem}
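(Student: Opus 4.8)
The plan is to convert the system (\ref{eq}) into an intersection problem on the compact complex manifold $X=\CC\PP^{d_1-1}\times\cdots\times\CC\PP^{d_n-1}$, and then read off all three statements from the topology of a single vector bundle. For each $i$ and each $w\in D_i\orth$, the scalar $\braket{w}{\psi^{\Gamma(S_i)}}$ is holomorphic in $\ket{\psi_j}$ for $j\notin S_i$ and anti-holomorphic in $\ket{\psi_j}$ for $j\in S_i$, so it is a smooth section of the complex line bundle $L_i=L_{i1}\boxtimes\cdots\boxtimes L_{in}$, where $L_{ij}=\mathcal O(1)$ for $j\notin S_i$ and $L_{ij}=\overline{\mathcal O(1)}$ for $j\in S_i$. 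Writing $h_j\in H^2(X)$ for the hyperplane class of the $j$-th factor and using $c_1(\overline{\mathcal O(1)})=-h_j$, I get $c_1(L_i)=\sum_j\sigma_{ij}h_j$. The condition $\ket\psi^{\Gamma(S_i)}\in D_i$ is the simultaneous vanishing of $k_i=\dim D_i\orth$ such sections, so the solution set of (\ref{eq}) is exactly the zero locus of a smooth section $s$ of the rank-$N_E$ bundle $E=\bigoplus_{i=1}^r L_i^{\oplus k_i}$, whose Euler class is
\[
e(E)=\prod_{i=1}^r c_1(L_i)^{k_i}=\prod_{i=1}^r\Big(\sum_{j=1}^n\sigma_{ij}h_j\Big)^{k_i}\in H^*(X)=\ZZ[h_1,\dots,h_n]/(h_1^{d_1},\dots,h_n^{d_n}),
\]
the very polynomial (\ref{criti}) read inside the cohomology ring.

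For (i) I would avoid delicate smooth transversality and count dimensions on the incidence variety $\mathcal I=\{(\psi,D_1,\dots,D_r):\ket\psi^{\Gamma(S_i)}\in D_i\ \forall i\}$ inside $X\times\prod_i\Gr(\dim\mathcal H-k_i,\mathcal H)$. Since each $\ket\psi^{\Gamma(S_i)}$ is a nonzero vector, for fixed $\psi$ the set of $D_i$ containing it is a sub-Grassmannian of complex codimension $k_i$; hence the fibre of $\mathcal I$ over $\psi$ has real codimension $2N_E$, giving $\dim_{\RR}\mathcal I=\dim_{\RR}\prod_i\Gr+2(N_U-N_E)$. When $N_E>N_U$ this is strictly less than $\dim_{\RR}\prod_i\Gr$, so (by Tarski--Seidenberg) the projection $\mathcal I\to\prod_i\Gr$ has image a proper semialgebraic subset; every $(D_i)$ in its dense complement has empty fibre, which is the generic nonexistence claimed in (i).

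Statements (ii) and (iii) both rest on the elementary obstruction principle that a complex bundle with nonzero top Chern class admits no nowhere-vanishing section, so $e(E)\neq0$ forces our particular section $s$ to vanish and thus produces a solution of (\ref{eq}) for \emph{every} choice of the $D_i$ (this is why neither statement carries a genericity hypothesis). In case (ii) we have $N_E=N_U=\dim_\CC X$, so $e(E)$ lands in the top group $H^{2N_U}(X)$ and, since $\int_X h_1^{d_1-1}\cdots h_n^{d_n-1}=1$, it is nonzero precisely when the coefficient of $h_1^{d_1-1}\cdots h_n^{d_n-1}$ in (\ref{criti}) is nonzero; this is the hypothesis, so $s$ vanishes and we win. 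In case (iii) the same principle applies with $e(E)\in H^{2N_E}(X)$, $N_E<N_U$: once $e(E)\neq0$ the zero locus $Z$ is nonempty, and because $Z$ is a real-algebraic set cut out by $2N_E$ real equations inside the $2N_U$-dimensional $X$, a nonempty $Z$ has dimension at least $2(N_U-N_E)>0$ and is therefore infinite. What remains for (iii) is to deduce $e(E)\neq0$ from $\rk\Sigma=r$; as $(h_1^{d_1},\dots,h_n^{d_n})$ is a monomial ideal, this is equivalent to $\prod_i(\sum_j\sigma_{ij}h_j)^{k_i}$ containing a monomial $h_1^{a_1}\cdots h_n^{a_n}$ with every $a_j\le d_j-1$, and since $\sum_j a_j=N_E<\sum_j(d_j-1)$ there is ample room for such an exponent vector.

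The hard part, as I see it, is exactly this last nonvanishing: proving that linear independence of the rows $(\sigma_{i1},\dots,\sigma_{in})$ prevents all admissible monomials of $\prod_i(\sum_j\sigma_{ij}h_j)^{k_i}$ from cancelling. One cannot shortcut it through the balanced coefficient used in (ii), since that coefficient is a signed permanent-type sum that may vanish even when $\rk\Sigma=r$ (already for $\Sigma=\begin{psmallmatrix}1&1\\1&-1\end{psmallmatrix}$ the balanced coefficient is $0$, yet the product survives through a different monomial). I therefore expect the decisive step to be combinatorial: express the coefficient of each admissible monomial of $\prod_i\ell_i^{k_i}$ as a permanent of a submatrix of $\Sigma$ built from the chosen column multiplicities, and use $\rk\Sigma=r$ to exhibit one such submatrix with nonzero permanent. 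By comparison, the ingredients feeding the cohomological skeleton above ($c_1(\overline{\mathcal O(1)})=-h_j$, the identity $e(E)=\prod_i c_1(L_i)^{k_i}$, and $\int_X h_1^{d_1-1}\cdots h_n^{d_n-1}=1$) are routine and carry no real difficulty.
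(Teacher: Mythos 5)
Your cohomological framework is sound and is, in substance, the paper's own argument in different packaging: encoding the $k_i$ conditions for each $i$ as a section of $L_i^{\oplus k_i}$ with $L_i$ a box product of $\mathcal O(1)$'s and their conjugates, and computing $e(E)=\prod_i(\sum_j\sigma_{ij}h_j)^{k_i}$, is exactly the paper's identification of the Poincar\'e dual of $\left(\phi^{\Gamma(S_i)}\right)^{-1}\left(\iota^{-1}(\PP D_i)\right)$ with $(\sigma_i\cdot\alpha)^{k_i}$, the sign coming from the orientation reversal under conjugation. Your part (ii) is complete. Your part (i), via the incidence variety $\mathcal I$ and a single dimension count on its fibres $\prod_i\Gr(d-1,k_i)$, is a clean and correct alternative to the paper's iterated Bertini/Morse--Sard argument: $\mathcal I$ is a fibre bundle over $X=\prod_j\PP^{d_j-1}$ of dimension $\dim\prod_i\Gr(d,k_i)+2(N_U-N_E)$, so its projection to $\prod_i\Gr(d,k_i)$ has image of measure zero when $N_E>N_U$.

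Part (iii), however, has a genuine gap, and it sits exactly where you suspected. The entire content of the rank hypothesis is the implication that $\rk\Sigma=r$ forces $P^{\bk}(\alpha)\neq 0$ in $\ZZ[\alpha]/(\alpha_j^{d_j})$, and you do not prove it: observing that admissible exponent vectors exist says nothing about cancellation, and Example \ref{ex_va} is precisely a case where every admissible coefficient cancels. This is Proposition \ref{pr_ncri} of the paper and is the hardest step of the whole theorem; the paper proves it by induction on $|\bk|$, extracting linear recurrences among the coefficients $A^{\bk}_{\bm}$ from $\partial P^{\bk}/\partial\alpha_j=\sum_i k_i\sigma_{i,j}P^{\bk-\be_i}$ and using the rank of $\Sigma$ to propagate vanishing to a contradiction --- not by exhibiting a single submatrix with nonzero permanent. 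Your proposed route would in any case have to contend with the fact that the coefficient of an admissible monomial is a permanent of a matrix built from columns of $\Sigma$ with multiplicities, and such permanents can vanish even when $\Sigma$ has full rank, so one must show that \emph{some} admissible coefficient survives, which is the whole difficulty. A secondary flaw: your claim that a nonempty $Z$ cut out by $2N_E$ real equations has dimension at least $2(N_U-N_E)$ is a false principle over the reals (one real equation $x^2+y^2=0$ cuts out a single point in $\RR^2$), and the differing conjugation patterns prevent you from making all $r$ groups of equations simultaneously holomorphic. The correct deduction is that a finite $Z$ would force the Euler class to localize in $H^{2N_E}(X,X\setminus Z)$, which vanishes by excision when $N_E<N_U$, contradicting $e(E)\neq 0$; the paper instead argues through a transverse perturbation of the intersection.
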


The case (ii) considers the critical case, in which the numbers of unknowns and equations coincide. We have exhibited in \cite{kye-prod-vec} examples in the two qubit and the two qutrit cases for which the system of equations (\ref{eq}) has no nonzero solution even though it has the same numbers of unknowns and equations. We have the same kind of an example for the four qubit case. See Example \ref{4_qu}. In the $n$ qubit cases, the condition in (ii) can be checked by computing the permanent of the associated matrix $\Sigma=[\sigma_{i,j}]$, which may be assumed to be a square matrix. It was shown in \cite{per05} that the permanent of an $n\times n$ square matrix whose entries are $\pm 1$ is never zero  if and only if $n=2^k-1$ with $k=2,3,\cdots$. From this, we may conclude that the coefficient condition in (ii) is redundant for the $n$ qubit cases, whenever $n=2^k-1$ with $k=2,3,\cdots$.  This is especially true for the three qubit case. The notion of permanent is also useful in the theory of entanglement in other contexts. See \cite{wei}.

The cases (i) and (iii) deal with the over-determined and under-determined cases, respectively. We need an unexpected rank condition to get the existence of infinitely many solutions for the under-determined case. We do not know if one may remove this condition, even though we provided an example which strongly suggests the role of rank condition. See Example \ref{ex_va}. We will see that the rank condition is redundant for the bi-partite case, the three and four qubit cases.

In the three qubit case, it is enough to consider four subsets $S_1=\{1\}, S_2=\{2\}, S_3=\{3\}, S_4=\emptyset$ of $[3]=\{1,2,3\}$, to check the PPT condition. The above discussions with the statements (ii) and (iii) prove the inequality
\begin{equation}\label{3qu}
\sum_{i=1}^4\rk \varrho^{T(S_i)}< 4\times 2^3-3=29
\end{equation}
for any three qubit PPT entangled edge states, as it was stated in \cite{abls}. It should be noted \cite{kye_osaka} that the corresponding inequality is false for the $3\otimes 3$ case. It is an open question if the corresponding inequality holds for the $2\otimes 4$ case. It is also unknown for the $n$ qubit cases when $n\ge 4$.

Theorem \ref{main} will be proved in the next section. We consider the multi-qubit cases in Section 3 to relate the notion of permanents and the existence of a nonzero solution of the system of equations (\ref{eq}). We also classify $4\times 4$ $(+1,-1)$-matrices with vanishing permanents, up to an equivalence relation. In the final section, we consider the problem of classifying PPT entangled edge states by their ranks of partial transposes, and discuss related questions.

\section{Proof of Main Theorem}
\subsection{Over-determined case}

We let $\Gr(d,k)$ denote the set of all subspaces $D$ of $\CC^d$ with $\dim D^{\perp}=k$, which is a manifold of complex dimension $k(d-k)$. See \cite[Chapter 1, Section 5]{GH_principles}. We say that a property holds for \emph{generic} subspaces $D$ in $\CC^d$ with $\dim D^{\perp}=k$ if there is a subset $\mathcal{W}$ of $\Gr(d,k)$ whose complement is of measure zero such that the property holds for all $D \in \mathcal{W}$.

Theorem \ref{main} (i) is a consequence of dimension estimates and the Morse-Sard theorem. We will write $\PP^{d-1}$ for the complex projective space $\CC\PP^{d-1}$ to simply the notations. Consider the following diagram:

\begin{equation}\label{diaphidef}
\xymatrix@M=12pt{
\prod_{j=1}^n \PP^{d_j-1} \ar[d]_{\txt<2pc>{$\mathit{\phi}^{\Gamma(S)}$}} \quad && \\
\prod_{j=1}^n \PP^{d_j-1}  \quad
\ar@{^{(}->}[rr]^{\txt<8pc>{$\mathit{\iota}$}} && \quad \PP^{d -1}}
\end{equation}
where ${d}=\prod_j d_j$. For $S\subset [n]$, the map $\phi^{\Gamma(S)}$ is the diffeomorphism which sends $([\ket{\psi_j}])$ to $([\ket{\phi_j}])$, where $\ket{\phi_j}$ is given by (\ref{par-conj}), and $[\ket{\psi}] \in \PP^{d-1}$ denotes the line spanned by $\ket{\psi}$. The injective map $\iota$ is the Segre embedding which sends $([\ket{\psi_j}])$ to $[\otimes_{j=1}^n \ket{\psi_j}]$. We want to show that the set
\begin{equation}\label{intexm}
\bigcap_{i=1}^r \left( \phi^{\Gamma(S_i)} \right)^{-1} \left( \iota^{-1} ( \PP D_i ) \right) = \left\{ ([\ket{\psi_j}]) \in
\prod_{j=1}^n \PP^{d_j -1}\, \mid \, \left( \otimes_{j=1}^n \ket{\psi_j} \right)^{\Gamma(S_i)} \in D_i \right\}
\end{equation}
is empty for generic choices of $D_i$.

By Bertini's theorem \cite[Chapter II, Theorem 8.18]{hart} in algebraic geometry, we may choose a generic $D_1$ such that $\iota^{-1} ( \PP D_1 )$ is a smooth manifold of real dimension $2(N_U-k_1)$. Let $E_1 := \left( \phi^{\Gamma(S_1)} \right)^{-1} \left( \iota^{-1} ( \PP D_1 ) \right)$. To choose $D_2$, let us consider the universal bundle $\cU_2$ over $\Gr({d},k_2)$ so that we have a diagram:

\[\xymatrix{ \PP\cU_2\ar[d]\ar@{^(->}[r] & \PP^{{d}-1}\times \Gr({d},k_2)\\ \Gr({d},k_2) }\]

Each fiber of the vertical arrow over a point $\xi \in \Gr({d},k_2)$ gives the linear subspace $\PP D_\xi\subset \PP^{{d}-1}$ represented by $\xi$. Via the Segre embedding, we can regard $\phi^{\Gamma(S_2)}(E_1)\times \Gr({d},k_2)$ as a subset of $\PP^{{d}-1}\times \Gr({d},k_2)$. Take the intersection $(\phi^{\Gamma(S_2)}(E_1)\times \Gr({d},k_2))\cap \PP \cU_2$. There are obvious projections 
\[\xymatrix{ & (\phi^{\Gamma(S_2)}(E_1)\times \Gr({d},k_2))\cap \PP \cU_2 \ar[dl]_p\ar[dr]^q\\ \phi^{\Gamma(S_2)}(E_1)&& \Gr({d},k_2) }\] 

Let us estimate the dimension of this intersection. For each point $\eta$ in $\phi^{\Gamma(S_2)}(E_1)$, $p^{-1}(\eta)$ is
$$\{D_2\in \Gr({d},k_2)\,|\, \eta\in \PP D_2\}\cong \Gr({d}-1,k_2)$$ 
since a subspace of $\CC^{{d}}$ of codimension $k_2$ containing a line $l_\eta$ represented by $\eta$ is uniquely determined by a subspace of $\CC^{{d}}/l_\eta=\CC^{{d}-1}$ of codimension $k_2$. Therefore, the intersection $(\phi^{\Gamma(S_2)}(E_1) \times \Gr({d},k_2))\cap \PP \cU_2$ is a smooth real manifold of real dimension
$$
2(N_U-k_1)+\dim_\RR \Gr({d}-1,k_2)=2(N_U-k_1)+2k_2({d}-1-k_2).
$$

If $q$ is not surjective, the fiber $q^{-1}(D_2) = \phi^{\Gamma(S_2)}(E_1) \cap \iota^{-1} (\PP D_2)$ is empty for a generic choice of $D_2 \in \Gr({d},k_2)$. Let $E_2 := \left( \phi^{\Gamma(S_2)} \right)^{-1} \left( \iota^{-1} ( \PP D_2 ) \right)$. Then
\begin{equation}\label{phiint}
\begin{split}
E_1 \cap E_2 = & E_1 \cap \left( \phi^{\Gamma(S_2)} \right)^{-1} \left( \iota^{-1} ( \PP D_2 ) \right) \\ = & \left( \phi^{\Gamma(S_2)} \right)^{-1} \left( \phi^{\Gamma(S_2)} (E_1) \cap  \iota^{-1} ( \PP D_2 ) \right) \\ = & \left( \phi^{\Gamma(S_2)} \right)^{-1} \left( q^{-1}(D_2) \right) =  \emptyset
\end{split}
\end{equation}
for such a generic $D_2$, so we have the statement (i). Thus we may assume that $q$ is surjective.

Applying the Morse-Sard theorem \cite[Chapter 3, Theorem 1.3]{hirsch} to the smooth map $q:(\phi^{\Gamma(S_2)}(E_1)\times
\Gr({d},k_2))\cap \PP \cU_2 \to \Gr({d},k_2)$, we find that over a generic choice of $D_2\in \Gr({d},k_2)$, the fiber $q^{-1}(D_2)$ of $q$ is a smooth manifold of real dimension
$$
2(N_U-k_1)+2k_2({d}-1-k_2) - 2k_2({d}-k_2)=2(N_U-k_1-k_2).
$$
For such a generic $D_2$, if we let $E_2 := \left( \phi^{\Gamma(S_2)} \right)^{-1} \left( \iota^{-1} ( \PP D_2 ) \right)$, then by \eqref{phiint}, $E_1\cap E_2 = \left( \phi^{\Gamma(S_2)} \right)^{-1} \left( q^{-1}(D_2) \right)$ is a smooth manifold of expected real dimension $2(N_U-k_1-k_2)$.

Now it is obvious how to proceed. We consider the universal bundle $\cU_3$ over $\Gr({d},k_3)$, the intersection
$$(\phi^{\Gamma(S_3)}(E_1\cap E_2)\times \Gr({d},k_3))\cap \PP \cU_3$$
and the projections to $\phi^{\Gamma(S_3)}(E_1\cap E_2)$ and $\Gr({d},k_3)$. If the projection to $\Gr({d},k_3)$ is not surjective, then $\phi^{\Gamma(S_3)}(E_1\cap E_2)\cap \iota^{-1} (\PP D_3 )$ is empty for a generic $D_3\in \Gr({d},k_3)$. For such a generic $D_3$, if we let $E_3 := \left( \phi^{\Gamma(S_3)} \right)^{-1} \left( \iota^{-1} ( \PP D_3 ) \right)$, 
$$E_1\cap E_2\cap E_3 = \left( \phi^{\Gamma(S_3)} \right)^{-1} \left( \phi^{\Gamma(S_3)} (E_1 \cap E_2) \cap  \iota^{-1} ( \PP D_3 ) \right)$$ 
is also empty and we have the theorem.

If the projection to $\Gr({d},k_3)$ is surjective, by the Morse-Sard theorem, we find that for a generic $D_3\in \Gr({d},k_3)$, $\phi^{\Gamma(S_3)}(E_1\cap E_2)\cap \PP D_3$ is a smooth manifold of real dimension $2(N_U-k_1-k_2-k_3)$. Then letting $E_3 := \left( \phi^{\Gamma(S_3)} \right)^{-1} \left( \iota^{-1} ( \PP D_3 ) \right)$, $E_1\cap E_2\cap E_3$ is also a smooth manifold of real dimension $2(N_U-k_1-k_2-k_3)$ for such a generic $D_3$. Continuing this way, the intersection 
$$\bigcap_{i=1}^r E_i = \bigcap_{i=1}^r \left( \phi^{\Gamma(S_i)} \right)^{-1} \left( \iota^{-1} ( \PP D_i ) \right)$$ 
eventually becomes empty for generic choices of $D_i$ since $N_U<\sum_i k_i=N_E$. This proves (i) of Theorem \ref{main}.

\subsection{Critical case}
For the statements (ii) and (iii), we need the following theorem which gives us an algebraic sufficient condition for the existence of nonzero solutions of the system of equations (\ref{eq}).

\begin{theorem}\label{algcond}
Let  $\{S_1,\cdots,S_r\}$ be sequences of subsets of $[n]$ and $\{D_1,\cdots,D_r\}$ subspaces of $\mathcal H=\bigotimes_{j=1}^n\mathbb C^{d_j}$ with $k_i=\dim\ D_i^\perp$. If
\begin{equation}\label{eq:1}
\prod_{i=1}^r (\sigma_{i,1} \alpha_1 + \sigma_{i,2} \alpha_2 + \cdots + \sigma_{i,n} \alpha_n)^{k_i} \neq 0
\end{equation}
in the ring $\ZZ[\seq{\alpha}{n}]/(\alpha_1^{d_1}, \alpha_2^{d_2}, \cdots, \alpha_n^{d_n}),$ then the system of equations {\rm (\ref{eq})} has a nonzero solution.
\end{theorem}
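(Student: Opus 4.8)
The plan is to reinterpret the purely algebraic hypothesis (\ref{eq:1}) as a statement in the integral cohomology ring of the compact manifold $X=\prod_{j=1}^n\PP^{d_j-1}$. By the K\"unneth formula together with $H^*(\PP^{d_j-1};\ZZ)=\ZZ[\alpha_j]/(\alpha_j^{d_j})$, where $\alpha_j\in H^2$ is the hyperplane class pulled back from the $j$-th factor, there is a ring isomorphism
$$
H^*(X;\ZZ)\cong \ZZ[\alpha_1,\dots,\alpha_n]/(\alpha_1^{d_1},\dots,\alpha_n^{d_n}),
$$
so the left side of (\ref{eq:1}) is literally a cohomology class on $X$. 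Exactly as in the over-determined case, the solution set of (\ref{eq}) is the intersection $\bigcap_{i=1}^r E_i$ with $E_i=(\phi^{\Gamma(S_i)})^{-1}(\iota^{-1}(\PP D_i))$, so it suffices to show that this intersection is nonempty whenever the class (\ref{eq:1}) is nonzero.

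First I would exhibit each $E_i$ as the common zero locus of $k_i$ smooth sections of a complex line bundle. Choosing a basis $\ell_1,\dots,\ell_{k_i}$ of $D_i^\perp\subset(\CC^{d})^*$ realizes $\PP D_i$ as the common zeros of $k_i$ linear forms, i.e. of $k_i$ holomorphic sections of $\mathcal O_{\PP^{d-1}}(1)$. Pulling these back along $\iota$ and then along $\phi^{\Gamma(S_i)}$ presents $E_i$ as the common zero locus of $k_i$ smooth sections of the line bundle $L_i:=(\phi^{\Gamma(S_i)})^*\iota^*\mathcal O_{\PP^{d-1}}(1)$. Since the Segre embedding pulls $\mathcal O_{\PP^{d-1}}(1)$ back to the external tensor product of hyperplane bundles, $\iota^*\mathcal O_{\PP^{d-1}}(1)$ has $c_1=\alpha_1+\cdots+\alpha_n$; and because $\phi^{\Gamma(S_i)}$ is the identity in the factors $j\notin S_i$ and complex conjugation in the factors $j\in S_i$, where complex conjugation on $\PP^{d_j-1}$ is antiholomorphic and hence acts by $-1$ on $H^2$, sending $\alpha_j\mapsto-\alpha_j$, I obtain
$$
c_1(L_i)=(\phi^{\Gamma(S_i)})^*(\alpha_1+\cdots+\alpha_n)=\sigma_{i,1}\alpha_1+\cdots+\sigma_{i,n}\alpha_n.
$$

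Finally I would bundle everything together. Set $V:=\bigoplus_{i=1}^r L_i^{\oplus k_i}$, a smooth complex vector bundle of rank $\sum_i k_i$, and let $s$ be the section whose components are the sections constructed above; its zero locus is precisely $\bigcap_i E_i$. As $V$ is a direct sum of line bundles, its top Chern class is the product of their first Chern classes,
$$
c_{\sum_i k_i}(V)=\prod_{i=1}^r c_1(L_i)^{k_i}=\prod_{i=1}^r(\sigma_{i,1}\alpha_1+\cdots+\sigma_{i,n}\alpha_n)^{k_i},
$$
which is exactly the class (\ref{eq:1}). Now I invoke the standard obstruction principle: if $\bigcap_i E_i$ were empty, then $s$ would be nowhere vanishing and would split off a trivial line subbundle $V\cong\underline{\CC}\oplus V'$ with $\rk V'=(\sum_i k_i)-1$; by multiplicativity of total Chern classes this forces the top Chern class to vanish. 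Contraposing, the nonvanishing of (\ref{eq:1}) guarantees that $s$ has a zero, i.e. the system (\ref{eq}) has a nonzero solution. The step I expect to require the most care is the sign computation $c_1(L_i)=\sum_j\sigma_{ij}\alpha_j$: one must verify that the antiholomorphic conjugation genuinely negates the hyperplane class and that $L_i$ is an honest smooth complex line bundle, so that the Euler-class argument applies \emph{without} any transversality hypothesis on the given sections. This last point is crucial, since it is precisely what lets us treat the specific subspaces $D_i$ rather than only generic ones.
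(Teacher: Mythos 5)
Your proof is correct, but it takes a genuinely different route from the paper's. The paper perturbs: it invokes Bertini's theorem to replace each $\PP D_i$ by a nearby $\PP D_i'$ with $\iota^{-1}(\PP D_i')$ smooth, uses the transversality theorem to make the resulting submanifolds $W_i$ meet transversely, identifies the Poincar\'e dual of $\bigcap_i W_i$ with the cup product $\prod_i(\sum_j\sigma_{i,j}\alpha_j)^{k_i}$, and finally argues by compactness that an empty intersection stays empty under small perturbation. You instead package the defining linear forms of the $D_i$ into a single smooth section $s$ of the rank-$\sum_i k_i$ bundle $V=\bigoplus_i L_i^{\oplus k_i}$ with $L_i=(\phi^{\Gamma(S_i)})^*\iota^*\mathcal O_{\PP^{d-1}}(1)$, observe that the zero locus of $s$ is exactly the solution set, and use the standard obstruction that a nowhere-vanishing section forces the top Chern (Euler) class $c_{\mathrm{top}}(V)=\prod_i c_1(L_i)^{k_i}$ to vanish. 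The key sign computation --- that the antiholomorphic conjugation sends $\alpha_j\mapsto-\alpha_j$, whence $c_1(L_i)=\sum_j\sigma_{i,j}\alpha_j$ --- is the same in both arguments, just phrased as a pullback of $c_1$ rather than as an orientation reversal of a Poincar\'e dual. What your approach buys is economy and robustness: no Bertini, no transversality theorem, no perturbation-of-an-empty-compact-intersection step, and the argument applies verbatim to the actual subspaces $D_i$ with no genericity or smoothness hypotheses on the loci $E_i$. What the paper's approach buys is slightly more: by producing a transverse intersection $\bigcap_i W_i$ that is a smooth manifold of the expected dimension, it sets up the refinement needed later (Proposition \ref{underdet}), where one concludes not merely nonemptiness but the existence of infinitely many solutions when $N_E<N_U$; your Euler-class argument as stated only yields nonemptiness of the zero locus.
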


Here, $\alpha_1, \cdots,\alpha_n$ are indeterminates, and $\ZZ[\seq{\alpha}{n}]/(\alpha_1^{d_1}, \alpha_2^{d_2}, \cdots, \alpha_n^{d_n})$ denotes the quotient ring of the polynomial ring $\ZZ[\seq{\alpha}{n}]$ by the ideal generated by $\alpha_1^{d_1}, \alpha_2^{d_2}, \cdots, \alpha_n^{d_n}$.

\medskip

\begin{proof}
Consider the diagram \eqref{diaphidef}. We have to measure the size of the set
$$
\bigcap_{i=1}^r \left( \phi^{\Gamma(S_i)} \right)^{-1} \left(
\iota^{-1} ( \PP D_i ) \right) = \left\{ ([\ket{\psi_j}]) \in
\prod_{j=1}^n \PP^{d_j -1}\, |\, \left( \otimes_{j=1}^n \ket{\psi_j}
\right)^{\Gamma(S_i)} \in D_i \right\}_.
$$
The cohomology ring of $\prod_{j=1}^n \PP^{d_j-1}$ is well understood:
$$H^{*}\left( \prod_{j=1}^n \PP^{d_j -1} \right) = \ZZ[\alpha_1,\cdots,\alpha_n]/(\alpha_1^{d_1}, \cdots, \alpha_n^{d_n})_.$$
A proof can be found in any textbook on algebraic topology. See \cite[Section 3.2]{hatcher} for example. By Bertini's theorem \cite[Chapter II, Theorem 8.18]{hart} again, we can choose perturbations $\PP D_i'$ of $\PP D_i$ such that $\iota^{-1} (\PP D_i' )$ are smooth and Poincar\'e dual to $(\alpha_1+\cdots +\alpha_n)^{k_i}$ for each $i=1,2,\cdots, r$. Since the complex conjugation changes the orientation, the perturbation $\left( \phi^{\Gamma(S_i)} \right)^{-1} \left( \iota^{-1} ( \PP D_i' ) \right)$ of  $\left( \phi^{\Gamma(S_i)} \right)^{-1} \left( \iota^{-1} ( \PP D_i ) \right)$  is a smooth submanifold of  $\prod_{j=1}^n \PP^{d_j-1}$, whose Poincar\'e dual is
$$
(\sigma_{i,1} \alpha_1 + \sigma_{i,2} \alpha_2 + \cdots + \sigma_{i,n} \alpha_n)^{k_i}.
$$
By the transversality theorem \cite[Chapter 3, Theorem 2.4]{hirsch} in differential topology, we can find perturbations $W_i$ in $\prod_{j=1}^n \PP^{d_j-1}$ of $\left( \phi^{\Gamma(S_i)} \right)^{-1} \left( \iota^{-1} ( \PP D_i' ) \right)$ that are still smooth and intersect transversely. Then the Poincar\'e dual of $\bigcap_{i=1}^r W_i$ is the class
$$
\prod_{i=1}^r (\sigma_{i,1} \alpha_1 + \sigma_{i,2} \alpha_2 +
\cdots + \sigma_{i,n} \alpha_n)^{k_i}
$$
in the cohomology ring $H^{*}\left( \prod_{j=1}^n \PP^{d_j -1} \right) = \ZZ[\alpha_1,\cdots,\alpha_n]/(\alpha_1^{d_1}, \cdots,
\alpha_n^{d_n})$. If the set $\bigcap_{i=1}^r \left( \phi^{\Gamma(S_i)} \right)^{-1} \left( \iota^{-1} ( \PP D_i ) \right)$ is empty, so is its small perturbation $\bigcap_{i=1}^r W_i$ and hence the cohomology class $\prod_{i=1}^r (\sigma_{i,1} \alpha_1 + \sigma_{i,2} \alpha_2 + \cdots + \sigma_{i,n} \alpha_n)^{k_i}$ should be zero. This proves the theorem.
\end{proof}

The statement (ii) of Theorem \ref{main} is an easy consequence of Theorem \ref{algcond}. Indeed, $\prod_{i=1}^r (\sigma_{i,1} \alpha_1 + \sigma_{i,2} \alpha_2 + \cdots + \sigma_{i,n} \alpha_n)^{k_i}$ in the quotient ring of the polynomial ring $\ZZ[\alpha_1,\cdots,\alpha_n]$ by the relations $\alpha_1^{d_1}= \cdots =\alpha_n^{d_n}=0$ should be a constant multiple of
$\prod_j\alpha_j^{d_j-1}$, because $N_E=\sum_i k_i=\sum_j(d_j-1)=N_U$ in the critical case.

It is worthwhile to consider the case when all the subsets $S_i\subset [n]$ are empty. In this case, $\sigma_{{i,j}}=1$ for every $i,j$ and
$$
\prod_{i=1}^r (\sigma_{i,1} \alpha_1 + \sigma_{i,2} \alpha_2 + \cdots + \sigma_{i,n} \alpha_n)^{k_i} =(\alpha_1 + \alpha_2 + \cdots + \alpha_n)^{\Sigma (d_j-1)}.
$$

It is straightforward to check that the coefficient of $\prod_j\alpha_j^{d_j-1}$ in the polynomial $(\alpha_1 + \alpha_2 + \cdots + \alpha_n)^{\sum (d_j-1)}$ is $\dfrac{\left( \sum_j (d_j -1) \right)!}{\prod_j (d_j -1)!}>0$. We thus obtain the following.

\begin{corollary}\label{segre}
Let $D_1,\cdots,D_r$ be subspaces of $\mathcal H$ with $k_i=\dim D_i^{\perp}$ for $i=1,\cdots,r$. If $\sum_{i=1}^r k_i = \sum_j(d_j-1)$, then we have the following:
\begin{enumerate}[{\rm (i)}]
\item There always exists a nonzero product vector $\ket{\psi}$ satisfying $\ket{\psi} \in D_i$ for $1 \leq i \leq r$.
\item The number of distinct nonzero product vectors $\ket{\psi}$ up to constant satisfying $\ket{\psi} \in D_i$ for $1 \leq i \leq r$ is less than or equal to $\dfrac{\left( \sum_j (d_j -1) \right)!}{\prod_j (d_j -1)!}$ if it is finite.
\item The equality holds for generic choices of $D_i$.
\end{enumerate}
\end{corollary}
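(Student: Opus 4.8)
The plan is to read all three statements off the geometry of the Segre embedding $\iota:\prod_{j=1}^n\PP^{d_j-1}\hookrightarrow\PP^{d-1}$ with $d=\prod_j d_j$, and in particular off the degree of its image. Since every subset $S_i$ is empty, each diffeomorphism $\phi^{\Gamma(S_i)}$ in the diagram \eqref{diaphidef} is the identity, so the set of nonzero product vectors lying in all the $D_i$ is exactly $\iota^{-1}(\PP L)$ with $L:=\bigcap_{i=1}^rD_i$. Because $\iota$ is induced by the line bundle $\mathcal O(1,\ldots,1)$, the hyperplane class of $\PP^{d-1}$ pulls back to $\alpha_1+\cdots+\alpha_n$, so the Segre variety $X:=\iota\big(\prod_j\PP^{d_j-1}\big)$ has degree
\[
\deg X=\int_{\prod_j\PP^{d_j-1}}(\alpha_1+\cdots+\alpha_n)^{N_U}=\frac{\left(\sum_j(d_j-1)\right)!}{\prod_j(d_j-1)!},
\]
which is precisely the coefficient of $\prod_j\alpha_j^{d_j-1}$ already computed in the text. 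Statement (i) is then immediate from Theorem \ref{algcond}, since this coefficient is strictly positive and hence nonzero.

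For (ii), I would first observe that a nonzero product vector up to a scalar is the same datum as a point of $\prod_j\PP^{d_j-1}$, and, since $\iota$ is an embedding, the same datum as a point of $X$; thus the product vectors lying in all the $D_i$ correspond bijectively to the points of $X\cap\PP L$. Assume this set is finite (if empty the bound is trivial, so assume also nonempty). Since $X$ is irreducible of dimension $N_U$ and $\PP L$ is linear of codimension $c=\dim\big(\bigcap_iD_i\big)^\perp=\dim\big(\sum_iD_i^\perp\big)\le\sum_ik_i=N_U$, every component of $X\cap\PP L$ has dimension at least $N_U-c$; finiteness therefore forces $c=N_U$, so the intersection is proper of the expected dimension $0$. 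Then the intersection number equals $\deg X\cdot\deg\PP L=\deg X$, and because all local intersection multiplicities of subvarieties meeting properly in a smooth complex variety are positive integers, the number of distinct points of $X\cap\PP L$ is at most $\deg X$, which is the asserted bound.

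For (iii), I would invoke genericity. For generic $D_i$ the subspaces $D_i^\perp$ are in general position, so their sum has the maximal dimension $N_U$ and $\PP L$ is a generic linear subspace of $\PP^{d-1}$ of codimension exactly $N_U=\dim X$. By the defining property of the degree — a generic linear section of complementary dimension of a projective variety is transverse and consists of exactly $\deg X$ reduced points, which is a Bertini/Kleiman transversality statement — the intersection $X\cap\PP L$ consists of exactly $\deg X$ distinct points. This meets the upper bound from (ii), giving equality.

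The main obstacle is the dimension bookkeeping in (ii): one must exclude the degenerate possibility that the $D_i$ are so special that $\PP L$ fails to meet $X$ in the expected dimension. The clean device is the irreducibility-plus-dimension argument above, which upgrades the hypothesis ``finite'' to ``proper,'' after which Bézout together with positivity of complex intersection multiplicities yields the count. The transversality input needed for (iii) is then standard once $\deg X$ has been identified with the multinomial coefficient.
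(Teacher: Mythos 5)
Your proposal is correct and follows the same route as the paper: part (i) is Theorem \ref{algcond} applied with all $S_i=\emptyset$, where the relevant coefficient is the multinomial coefficient $\bigl(\sum_j(d_j-1)\bigr)!/\prod_j(d_j-1)!$, which you correctly identify as the degree of the Segre variety. The paper leaves (ii) and (iii) essentially unproved (citing Walgate--Scott for (iii)), and your B\'ezout argument with positivity of local intersection multiplicities, together with the observation that finiteness forces the linear section to be proper, is exactly the right way to fill in those details.
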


We remark that (iii) was obtained in \cite[Corollary 3.9]{walgate08generic}.

\subsection{Under-determined case}
For the proof of the statement (iii) of Theorem \ref{main}, we introduce some vector notations. For $\bk := (\seq{k}{r})$, $\bm := (\seq{m}{n})$ and $\alpha := (\seq{\alpha}{n})$, we denote $|\bk|:=\sum k_i$, $|\bm|:=\sum m_j$ and $\alpha^{\bm} := \prod_{j=1}^n \alpha_j^{m_j}$. Let $\sigma_i:=(\sigma_{i,1},\cdots,\sigma_{i,n}) \in \{-1,+1\}^n$ so that we can write $\sigma_i\cdot \alpha:=\sigma_{i,1} \alpha_1 + \sigma_{i,2} \alpha_2 + \cdots + \sigma_{i,n} \alpha_n.$ By expanding, we write
\begin{equation*}
P^{\bk}(\alpha) := \prod_{i=1}^r (\sigma_{i,1} \alpha_1 + \sigma_{i,2} \alpha_2 + \cdots + \sigma_{i,n} \alpha_n)^{k_i} = \sum_{|\bm| = |\bk|} A^{\bk}_{\bm} \alpha^{\bm}
\end{equation*}
for $A^{\bk}_{\bm}\in \ZZ.$ For convenience, we define $A^{\bk}_{\bm}$ to be zero whenever there is a component of $\bk$ or $\bm$ which is negative. For two vectors $\bv = (\seq{v}{n})$ and $\bw=(\seq{w}{n})$ in $\ZZ^n$, we say $\bv \geq \bw$ when $v_i \geq w_i$ for all $i$. We begin with the following:

\begin{proposition}\label{underdet}
Let $D_1,\cdots, D_r$ be subspaces of $\mathcal H = \Cdn$ with $k_i=\dim D_i^{\perp}$ for $i=1,\cdots,r$. If $N_E=\sum_{i=1}^r k_i < N_U$ and $P^{\bk}(\alpha) =\prod_{i=1}^r (\sigma_{i,1} \alpha_1 + \sigma_{i,2} \alpha_2 + \cdots + \sigma_{i,n} \alpha_n)^{k_i}$ is not zero in the ring $\ZZ[\alpha]/(\alpha_j^{d_j})_{1 \leq j \leq n}$, then the system of equations {\rm (\ref{eq})} has infinitely many solutions.
\end{proposition}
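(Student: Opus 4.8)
The plan is to adjoin auxiliary linear constraints that push the problem up to the critical (top) degree, so that the cohomological argument proving Theorem \ref{algcond} yields a solution, and to choose these constraints so as to single out \emph{one} nonvanishing coefficient of $P^{\bk}(\alpha)$ rather than a combination that might cancel. First, since $P^{\bk}(\alpha)\neq0$ in $\ZZ[\alpha]/(\alpha_j^{d_j})$, I select a monomial $\alpha^{\bm_0}$ with $A^{\bk}_{\bm_0}\neq0$ and $0\le m_{0,j}\le d_j-1$ for every $j$. Setting $e_j:=d_j-1-m_{0,j}$ and $\be:=(\seq{e}{n})$, we have $e_j\ge0$, $e_j\le d_j-1$, and $|\be|=N_U-N_E>0$. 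The decisive computation is that, in the quotient ring,
\[
P^{\bk}(\alpha)\cdot\alpha^{\be}=\sum_{|\bm|=N_E}A^{\bk}_{\bm}\,\alpha^{\bm+\be}=A^{\bk}_{\bm_0}\prod_{j}\alpha_j^{d_j-1}\neq0,
\]
because a surviving monomial $\alpha^{\bm+\be}$ forces $m_j\le m_{0,j}$ for all $j$, and combined with $|\bm|=|\bm_0|=N_E$ this pins down $\bm=\bm_0$. Thus the single monomial factor $\alpha^{\be}$ lifts the nonzero coefficient $A^{\bk}_{\bm_0}$ up to the top class with no cancellation; this is precisely why one must use $\alpha^{\be}$ rather than $(\alpha_1+\cdots+\alpha_n)^{|\be|}$.

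Next I interpret $\alpha^{\be}$ geometrically. For each $j$ choose a linear subspace $L_j\subseteq\CC^{d_j}$ of codimension $e_j$ (nonempty projectively since $e_j\le d_j-1$), and let $V_j\subseteq\prod_l\PP^{d_l-1}$ be the locus $\{([\ket{\psi_l}])\,:\,\ket{\psi_j}\in L_j\}$; this is a smooth subvariety with Poincar\'e dual $\alpha_j^{e_j}$, for every choice of $L_j$. I then run the transversality-and-Poincar\'e-dual argument exactly as in the proof of Theorem \ref{algcond}, but applied to the enlarged family consisting of the original loci $(\phi^{\Gamma(S_i)})^{-1}(\iota^{-1}(\PP D_i))$ together with the $V_j$. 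The product of Poincar\'e duals is $P^{\bk}(\alpha)\cdot\alpha^{\be}=A^{\bk}_{\bm_0}\prod_j\alpha_j^{d_j-1}\neq0$, so the total intersection is nonempty. Hence for \emph{every} choice of the subspaces $L_j$ there is a nonzero product vector $\ket{\psi}$ solving the original system (\ref{eq}) and additionally satisfying $\ket{\psi_j}\in L_j$ for all $j$.

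Finally I upgrade nonemptiness to infinitude. Let $Z$ denote the solution set of (\ref{eq}) and suppose it were finite; pick an index $j_0$ with $e_{j_0}\ge1$, which exists since $|\be|>0$. The projections to the factor $\PP^{d_{j_0}-1}$ of the finitely many points of $Z$ form a finite set $Q$, so a generic codimension-$e_{j_0}$ subspace $L_{j_0}$ satisfies $\PP L_{j_0}\cap Q=\emptyset$. Choosing such an $L_{j_0}$ and arbitrary $L_j$ for $j\neq j_0$, the previous paragraph produces a solution $\ket{\psi}\in Z$ with $\ket{\psi_{j_0}}\in L_{j_0}$, contradicting $\PP L_{j_0}\cap Q=\emptyset$. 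Therefore $Z$ is infinite.

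I expect the main obstacle to be the middle step: one must verify that the highly non-generic loci $V_j$ genuinely carry the expected Poincar\'e dual $\alpha_j^{e_j}$ and fit into the perturbation/transversality scheme underlying Theorem \ref{algcond}. Describing $V_j$ as a condition of the form $\ket{\psi}^{\Gamma(S)}\in D$ would force $\dim D^\perp=\prod_{l\neq j}d_l$ and hence the ``wrong'' dual $(\sigma\cdot\alpha)^{\dim D^\perp}$; it is essential to treat the $V_j$ directly as subvarieties with their true codimension, and to track that small perturbations of an empty intersection of compacta remain empty, so that nonvanishing of the dual product forces the actual (unperturbed) intersection to be nonempty.
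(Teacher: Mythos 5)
Your proof is correct, but it takes a genuinely different route from the paper's on the decisive step. The paper's own proof is very short: it perturbs the loci $\left(\phi^{\Gamma(S_i)}\right)^{-1}\left(\iota^{-1}(\PP D_i)\right)$ as in Theorem \ref{algcond}, observes that the nonvanishing of $P^{\bk}(\alpha)$ forces the perturbed intersection to be a nonempty smooth manifold of real dimension $2(N_U-N_E)>0$, and concludes directly that the solution set is infinite. You instead cut the problem down to the critical degree by adjoining auxiliary linear constraints $\ket{\psi_j}\in L_j$ whose codimensions are tailored to a single nonvanishing coefficient $A^{\bk}_{\bm_0}$, so that the product of Poincar\'e duals becomes $A^{\bk}_{\bm_0}\prod_j\alpha_j^{d_j-1}\neq 0$ with no possible cancellation; you then get nonemptiness of every such slice from the ``empty intersections of compacta stay empty under small perturbation'' direction alone, and derive infinitude by moving a generic slice off any putative finite solution set. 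What your approach buys is a cleaner logical status for the infinitude claim: the paper passes from ``the perturbed intersection is positive-dimensional'' to ``the original intersection is infinite,'' which is the one place where its argument is terse (a finite intersection can in principle have infinite small perturbations, so only the nonemptiness transfer is immediate), whereas your slicing argument needs only the nonemptiness transfer and is airtight. The cost is the extra bookkeeping you correctly flag: one must treat the loci $V_j$ directly as smooth subvarieties with dual $\alpha_j^{e_j}$ (valid for every $L_j$, since all linear subspaces of a fixed codimension in $\PP^{d_j-1}$ are homologous) rather than shoehorning them into the form $\ket{\psi}^{\Gamma(S)}\in D$, and one must note that $e_j\le d_j-1$ so each $\PP L_j$ is nonempty and, for $e_{j_0}\ge 1$, a generic choice misses any finite set. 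Both proofs rest on the same cohomological engine from Theorem \ref{algcond}; yours is longer but self-contained on the finiteness point.
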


\begin{proof}
Since $P^{\bk}(\alpha)$ is the Poincar\'e dual of a small perturbation of the intersection
$$
\bigcap_{i=1}^r \left( \phi^{\Gamma(S_i)} \right)^{-1} \left( \iota^{-1} ( \PP D_i ) \right) = \left\{ ([\ket{\psi_j}]) \in \prod_{j=1}^n \PP^{d_j -1}\, |\, \left( \otimes_{j=1}^n \ket{\psi_j} \right)^{\Gamma(S_i)} \in D_i \right\}
$$
as shown in the proof of Theorem \ref{algcond}, the nonvanishing of the class $P^{\bk}(\alpha)$ implies that a small perturbation of the intersection is a nonempty smooth manifold of real dimension $2(N_U - \sum_{i=1}^r k_i)>0.$ Therefore, the intersection always has infinitely many points and hence we find that there are uncountably many product vectors $\ket{\psi}$ satisfying $\ket{\psi}^{\Gamma(S_i)} \in D_i$.
\end{proof}

The next question is when  $P^\bk(\alpha)$ is nonzero in the ring $\ZZ[\alpha]/(\alpha_j^{d_j})_{1 \leq j \leq n}$. In \cite[Lemma 2]{kye-prod-vec}, it was shown that $P^\bk(\alpha)$ is always nonzero in the under-determined case if $n=2$. However it is not true even for $n=3$.

\begin{example}\label{ex_va}
Let $n=3$. Let $S_1=\{1\}$, $S_2=\{2\}$, $S_3=\{3\}$ and $S_4=\emptyset$. Let $d_1=d_2=2$, $d_3=4$, and $k_1=k_2=k_3=k_4=1$. Then $N_E=4<5=N_U.$ In the ring $\ZZ[\alpha]/(\alpha_j^{d_j})$, we have
$$
P^\bk(\alpha)=(-\alpha_1+\alpha_2+\alpha_3)(\alpha_1-\alpha_2+\alpha_3)(\alpha_1+\alpha_2-\alpha_3)(\alpha_1+\alpha_2+\alpha_3)=0
$$
since $\alpha_1^2=\alpha_2^2=\alpha_3^4=0$. Hence $P^\bk(\alpha)$ may be zero even for the under-determined case when $n=3$. We note that the associated matrix $\Sigma$ is given by
$$
\left(\begin{matrix}
-&+&+\\ +&-&+\\+&+&-\\+&+&+
\end{matrix}\right),
$$
where $+$ and $-$ denote $+1$ and $-1$, respectively.
\end{example}

In this example, the matrix $\Sigma$ has rank smaller than $r$. This suggests that we may have to impose a condition on the rank of $\Sigma$ in order to have the nonvanishing of $P^\bk(\alpha)$. Here is a criterion, and this completes the proof of (iii) of Theorem \ref{main}.

\begin{proposition}\label{pr_ncri}
Let $\Sigma=(\sigma_{i,j})$ be an $r\times n$ matrix whose entries are $\pm 1$. Let $k_1, \cdots, k_r \in \mathbb{Z}_{\geq 0}$ and $d_1, \cdots, d_n \in \mathbb{Z}_{>0}$. If $\sum_{i=1}^r k_i < \sum_{j=1}^n (d_j-1)$ and the rank of $\Sigma$ is $r$, then $P^\bk(\alpha)=\prod_{i=1}^r(\sigma_i\cdot \alpha)^{k_i}$ is nonzero in the ring $\ZZ[\alpha]/(\alpha_j^{d_j})_{1 \leq j \leq n}$ for $\bk\ge 0$.
\end{proposition}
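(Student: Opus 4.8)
The plan is to turn the claim into a statement about a single coefficient, dualize it by apolarity, use the rank hypothesis to straighten out the resulting differential operator, and then induct on $r$. Write $e_j:=d_j-1$ and $E:=\sum_j e_j$. First I would record that $P^{\bk}(\alpha)$ is nonzero in $\ZZ[\alpha]/(\alpha_j^{d_j})$ precisely when some $A^{\bk}_{\bm}$ with $m_j\le e_j$ for all $j$ is nonzero, because the monomials $\alpha^{\bm}$ with $\bm\le(e_1,\dots,e_n)$ stay linearly independent in the quotient. I would read this off through the apolarity pairing: the quotient is the apolar algebra of $F:=\prod_j x_j^{e_j}$, with $\alpha_j$ acting on $\ZZ[x_1,\dots,x_n]$ as $\partial/\partial x_j$. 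Since $\mathrm{Ann}(F)=(\partial_{x_j}^{d_j})_{j}=(\alpha_j^{d_j})_j$, the assignment $p\mapsto p(\partial)F$ is injective, so that (writing $\sigma_i\cdot\partial_x:=\sum_j\sigma_{ij}\,\partial/\partial x_j$)
\[
P^{\bk}(\alpha)\neq 0 \ \text{in}\ \ZZ[\alpha]/(\alpha_j^{d_j})
\iff
\prod_{i=1}^r(\sigma_i\cdot\partial_x)^{k_i}\Big[\,\prod_{j=1}^n x_j^{e_j}\Big]\neq 0 .
\]

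Now the rank hypothesis does its work. Since $\rk\Sigma=r$, the rows $\sigma_1,\dots,\sigma_r$ are linearly independent; I would complete them to a basis of $\mathbb Q^n$ using the standard vectors $\be_j$ for $j$ outside a fixed set $J$ of $r$ columns with $\Sigma_J$ invertible, and make the dual linear change of variables $y$, which turns $\prod_i(\sigma_i\cdot\partial_x)^{k_i}$ into the pure monomial operator $\partial_{y_1}^{k_1}\cdots\partial_{y_r}^{k_r}=\partial_y^{\bk}$. Setting the $n-r$ extra coordinates to $0$ is harmless—these are never differentiated, so the specialization commutes with $\partial_y^{\bk}$, and a nonzero specialization forces the original to be nonzero—and it collapses $F$ to $\prod_{j=1}^n M_j(y)^{e_j}$ in $y_1,\dots,y_r$, where $M_j(y)=\sum_{i=1}^r\sigma_{ij}y_i$ is the $j$-th column of $\Sigma$; these forms span $\mathbb Q^r$ exactly because $\rk\Sigma=r$. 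Since $\partial_y^{\bk}$ kills a polynomial iff that polynomial has no monomial $y^{\mathbf b}$ with $\mathbf b\ge\bk$, the whole proposition reduces to the combinatorial claim: \emph{if $M_1,\dots,M_n$ are spanning $(\pm1)$-forms in $r$ variables and $|\bk|<E$, then $\prod_j M_j^{e_j}$ has a monomial $y^{\mathbf b}$ with $\mathbf b\ge\bk$ and nonzero coefficient.}

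I would prove this reduced claim by induction on $r$; the base $r=1$ is trivial since $\prod_j M_j^{e_j}=\pm y_1^{E}$ and $E>k_1$. If some $k_{i_0}=0$ the step is clean: the coefficient of $y_{i_0}^{0}$ in $\prod_j M_j^{e_j}$ is $\prod_j N_j^{e_j}$ with $N_j=M_j|_{y_{i_0}=0}$ the columns of $\Sigma$ with row $i_0$ removed; deleting one row from a full row-rank matrix leaves rank $r-1$, so the $N_j$ span $\mathbb Q^{r-1}$, and the induction hypothesis (with the same $|\bk|<E$) yields a monomial with $b_{i_0}=0=k_{i_0}$ and all other $b_i\ge k_i$.

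The hard part will be the case $k_i\ge1$ for every $i$ (so $|\bk|\ge r$): now no coordinate can be zeroed, and one is forced into the interior monomials where the $\pm1$ signs genuinely cancel—this is exactly what happens in Example \ref{ex_va}, where it is the failure of the rank hypothesis that lets every admissible coefficient vanish. Here I would use the slack $E-|\bk|\ge1$ together with $\Sigma_J$: isolating a variable $y_r$, the coefficient of $y_r^{c}$ in $\prod_j M_j^{e_j}$ contains a monomial $\ge(k_1,\dots,k_{r-1})$ at the extreme $c=0$ by the spanning-column argument above, and I must propagate this to some admissible exponent $c\ge k_r$. I expect this to reduce to a one–dimensional density fact—that a univariate product $\prod_j(1+\lambda_j t)^{e_j}$ obtained by restricting $\prod_j M_j^{e_j}$ to a generic line cannot have a block of $E-|\bk|$ consecutive vanishing coefficients—so that the window $\{\mathbf b\ge\bk\}$, which has width $E-|\bk|$, is forced to meet the support. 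Proving this non-vanishing, which the linear independence of the $M_j$ should guarantee, is the real obstacle; everything else is formal.
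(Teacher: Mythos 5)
Your reduction is sound as far as it goes: the apolarity translation, the linear change of variables exploiting $\rk\Sigma=r$, and the specialization $y_{r+1}=\cdots=y_n=0$ are all correct, and they legitimately reduce the proposition to the claim that $\prod_{j=1}^n M_j(y)^{e_j}$, with $M_j$ the $j$-th column of $\Sigma$ read as a $(\pm1)$-form in $y_1,\dots,y_r$ and with these forms spanning the dual of $\mathbb{Q}^r$, carries a nonzero coefficient on some monomial $y^{\mathbf b}$ with $\mathbf b\ge\bk$. Your induction also correctly handles the base case $r=1$ and the case where some $k_{i_0}=0$. But the case $k_i\ge1$ for every $i$ --- which is the entire content of the proposition, since it is exactly where the $\pm1$ cancellations of Example \ref{ex_va} live --- is not proved. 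What you offer there is a plan hinging on an unproved ``density'' lemma (that a univariate restriction $\prod_j(1+\lambda_j t)^{e_j}$ has no block of $E-|\bk|$ consecutive vanishing coefficients), and even granting that lemma the link to the target is missing: the admissible set $\{\mathbf b\ge\bk,\ |\mathbf b|=E\}$ is a simplex of exponent vectors, not a run of consecutive coefficients of a line restriction, and restricting to a line mixes all degree-$E$ monomials together; nor is it explained how the spanning hypothesis on the $M_j$ would enter the proof of the univariate statement. Since you yourself identify this step as ``the real obstacle,'' the proposal is an incomplete proof with a genuine gap at its core.

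For contrast, the paper closes precisely this step by a different mechanism and with no case split on the $k_i$. It works directly with the coefficients $A^{\bk}_{\bm}$ of $P^{\bk}$ and exploits two exact recursions: one obtained by differentiating, $\partial P^{\bk}/\partial\alpha_j=\sum_i k_i\sigma_{i,j}P^{\bk-\be_i}$, which packages the coefficients $A^{\bk-\be_i}_{\bm-\be_\ell}$ into an $n\times r$ linear system whose matrix is $(k_i\sigma_{i,j})^{\ttt}$, and one Pascal-type identity from the multinomial expansion relating $A^{\bk}_{\bm}$ to the $A^{\bk-\be_j}_{\bm-\be_i}$. The rank hypothesis is used exactly once, to conclude that the vector of coefficients annihilated by $(k_i\sigma_{i,j})^{\ttt}$ must vanish; the propagation your sketch lacks is supplied by a minimal-counterexample induction on $|\bk|$ interleaved with a descending induction on the value of a fixed coordinate $m_s$ (the statement $\cT^{\bk}_{s,m}$). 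If you want to complete your route, you would need to prove your reduced combinatorial claim in full; as it stands, the argument establishes only the reduction and the degenerate cases.
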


\begin{proof}
We fix $d_1,d_2,\cdots,d_n$, $\Sigma$ and allow $\bk$ to vary. The proposition is equivalent to saying that there is an $n$-tuple of nonnegative integers $\bm := (\seq{m}{n})$ such that $|\bm| = |\bk|$, $m_j \leq d_j-1$ for every $j$ and $A^{\bk}_{\bm} \neq 0$ whenever $\sum_{i=1}^r k_i < \sum_{j=1}^n (d_j-1)$. This is obvious for $\bk=0$ since $A^{0,0,\cdots,0}_{0,0,\cdots,0}=1$. Suppose that there is a ${\bk} \geq 0$ for which the proposition fails. Let $\tilde{\bk}$ be such a vector with $|\tilde{\bk}|$ minimal.

Consider the following statement for nonnegative integers $s$ and $m$.\\

$\cT^{\bk}_{s,m}$ : \emph{All the coefficients $A^{\bk}_{\bm}$ are zero whenever $m_s=m$ or
$m_s=m+1$ and when $m_j \leq d_j-1$ for $1 \leq j \leq n$.}\\

We claim that \emph{for a fixed $\bk$ and given $s$, if the statement $\cT^{\bk}_{s,m}$ holds for some $m$, then so does the statement $\cT^{\bk-\be_i}_{s,m-1}$ for every $i$}, where $\be_i$ denotes the $i$-th standard basis vector.

This claim induces a contradiction to the minimality of $|\tilde{\bk}|$ and hence proves the proposition. Indeed, by the assumption on $\tilde{\bk}$, $\cT^{\tilde{\bk}}_{s,m}$ holds for every $s$ and $m \leq d_s-2$. Then the claim says that the statement $\cT^{\tilde{\bk}-\be_i}_{s,m}$ holds for every $s$ and $m \leq d_s-3$. In particular, $A^{\tilde{\bk}-\be_i}_{\bm}$ can be nonzero only when $m_s = d_s -1$ for every $s$, which is impossible since $|\bm| = |\tilde{\bk}-\be_i|=|\tilde{\bk}|-1 < \sum (d_j -1)=N_U$. Therefore, all the $A^{\tilde{\bk}-\be_i}_{\bm}$ are zero for every $\bm$ satisfying $|\bm| = |\tilde{\bk}|-1$ and $m_j \leq d_j-1$. This contradicts the minimality of $|\tilde{\bk}|$.

Now we prove the claim. Suppose that the statement $\cT^{\bk}_{s,m}$ holds for some $s$ and $m$. For each $j$, we take the partial derivative of $P^\bk:=P^{\bk}(\alpha)$ with respect to $\alpha_j$ to obtain the following:
$$
\frac{\partial}{\partial \alpha_j}P^{\bk} = \sum_{i=1}^r k_i \sigma_{i,j} P^{\bk - \be_i} = \sum_{i=1}^r k_i \sigma_{i,j} \left( \sum_{|\bm\pri| = |\bk - \be_i|} A^{\bk-\be_i}_{\bm\pri} \alpha^{\bm\pri} \right) = \sum_{|\bm| = |\bk|} m_j A^{\bk}_{\bm} \alpha^{\bm-\be_j}.$$

We fix an integer $\ell$. If we take the coefficient of the monomial $\alpha^{\bm-\be_{\ell}}$ of the equation above for each $j$, we get the following system of equations:

\begin{equation*}
\begin{matrix}
m_1 A^{\bk}_{\bm + \be_1 - \be_{\ell}} & =
 & k_1 \sigma_{1,1} A^{\bk -\be_1}_{\bm-\be_{\ell}}
 & + & k_2 \sigma_{2,1} A^{\bk -\be_2}_{\bm-\be_{\ell}} & + \cdots + & k_r \sigma_{r,1} A^{\bk -\be_r}_{\bm-\be_{\ell}} \\
m_2 A^{\bk}_{\bm + \be_2 - \be_{\ell}} & =
 & k_1 \sigma_{1,2} A^{\bk -\be_1}_{\bm-\be_{\ell}}
 & + & k_2 \sigma_{2,2} A^{\bk -\be_2}_{\bm-\be_{\ell}} & + \cdots + & k_r \sigma_{r,2} A^{\bk -\be_r}_{\bm-\be_{\ell}} \\
& & \cdots & & & \\
m_{{\ell}-1} A^{\bk}_{\bm + \be_{{\ell}-1} - \be_{\ell}} &
 = & k_1 \sigma_{1,{\ell}-1} A^{\bk -\be_1}_{\bm-\be_{\ell}}
 & + & k_2 \sigma_{2,{\ell}-1} A^{\bk -\be_2}_{\bm-\be_{\ell}} & + \cdots + & k_r \sigma_{r,{\ell}-1} A^{\bk -\be_r}_{\bm-\be_{\ell}} \\
m_{\ell} A^{\bk}_{\bm} & = & k_1 \sigma_{1,{\ell}} A^{\bk -\be_1}_{\bm-\be_{\ell}}
 & + & k_2 \sigma_{2,{\ell}} A^{\bk -\be_2}_{\bm-\be_{\ell}} & + \cdots + & k_r \sigma_{r,{\ell}} A^{\bk -\be_r}_{\bm-\be_{\ell}} \\
m_{{\ell}+1} A^{\bk}_{\bm + \be_{{\ell}+1} - \be_{\ell}} & =
 & k_1 \sigma_{1,{\ell}+1} A^{\bk -\be_1}_{\bm-\be_{\ell}}
  & + & k_2 \sigma_{2,{\ell}+1} A^{\bk -\be_2}_{\bm-\be_{\ell}} & + \cdots + & k_r \sigma_{r,{\ell}+1} A^{\bk -\be_r}_{\bm-\be_{\ell}} \\
& & \cdots & & & \\
m_n A^{\bk}_{\bm + \be_n - \be_{\ell}} & = & k_1 \sigma_{1,n} A^{\bk
-\be_1}_{\bm-\be_{\ell}} & + & k_2 \sigma_{2,n} A^{\bk
-\be_2}_{\bm-\be_{\ell}} & + \cdots + & k_r \sigma_{r,n} A^{\bk
-\be_r}_{\bm-\be_{\ell}}
\end{matrix}
\end{equation*}

If $m_s =m$ and $\ell \neq s$, or $m_s =m+1$ and $\ell = s$, then LHS are all zero by assumption. Therefore, we have

\begin{equation*}
\begin{pmatrix}
\text{---} & k_1 \sigma_{1,j} & \text{---} \\
\text{---} & k_2 \sigma_{2,j} & \text{---} \\
& \cdots & \\
\text{---} & k_r \sigma_{r,j} & \text{---}
\end{pmatrix}^t \cdot
\begin{pmatrix}
A^{\bk-\be_1}_{\bm-\be_{\ell}} \vspace{2mm} \\
A^{\bk-\be_2}_{\bm-\be_{\ell}} \vspace{2mm} \\ \vdots \vspace{2mm}
\\ A^{\bk-\be_r}_{\bm-\be_{\ell}}
\end{pmatrix} = O.
\end{equation*}

Since the matrix $(\sigma_{i,j})$ has rank $r$, so does the matrix $(k_i \sigma_{i,j})^t$ whenever all the $k_i \neq 0$. Hence, all the $A^{\bk-\be_i}_{\bm\pri}$ are zero for any $i$ when the $s$-th component $m'_s$ of $\bm\pri$ is $m$ and $k_i \neq 0$ for all $i$. If some $k_i$ is zero, we can simply remove the $i$-th column from the matrix $(k_i \sigma_{i,j})^t$ and $A^{\bk-\be_i}_{\bm-\be_{\ell}}$ from the column vector because $A^{\bk-\be_i}_{\bm-\be_{\ell}}=0$ by our convention. The modified matrix of $(k_i \sigma_{i,j})^t$ has full rank as well, so the column vector must be also zero. Therefore, all the $A^{\bk-\be_i}_{\bm\pri}$ are zero for every $i$ and $\bm'$ with $m'_s=m$ and $|\bm'|=|\bk|-1$.

Now, we claim that $A^{\bk-\be_i}_{\bm\pri}$ are zero for any $i$ when $m'_s=m-1$. By expanding $P^\bk(\alpha)$ directly, we obtain the following:
$$
A^{\bk}_{\bm} = \sum_{m_j
= \sum_i k_{i,j}} \prod_{i=1}^r \left( k_i \atop \bk_i \right) \prod_{j=1}^n \sigma_{i,j}^{k_{i,j}} {}_,
$$
where $\bk_i := (k_{i,1}, k_{i,2}, \cdots, k_{i,n} )$ and $\displaystyle \left( k_i \atop \bk_i \right) := \dfrac{k_i!}{\prod_j (k_{i,j}!)}$ when $k_{i,j} \geq 0$ and $k_i = |\bk_i|$. We let $\displaystyle \left( k_i \atop \bk_i \right)=0$
if some $k_{i,j}<0$. Since $\displaystyle \left( k_i \atop \bk_i \right) = \sum_j \left( k_i-1 \atop \bk_i -\be_j \right)$,
$$A^{\bk}_{\bm} = \sigma_{1,j} A^{\bk -\be_j}_{\bm-\be_1} + \sigma_{2,j} A^{\bk -\be_j}_{\bm-\be_2} + \cdots + \sigma_{n,j} A^{\bk -\be_j}_{\bm-\be_n} \quad \text{ for each }j.$$

Note that if $m_s=m$, then $A^{\bk}_{\bm}$ and $A^{\bk -\be_j}_{\bm-\be_i}$ are zero for $i \neq s$. We thus have $A^{\bk -\be_j}_{\bm-\be_s}=0$ for every $j$ and $\bm$ with $m_s=m$. Therefore, all the $A^{\bk-\be_i}_{\bm\pri}$ are zero for any $i$ when the $s$-th component of $\bm\pri$ is $m-1$ or $m$. We thus proved the statement $\cT^{\bk-\be_i}_{s,m-1}$ for every $i$. This completes the proof.
\end{proof}

In order to apply Theorem \ref{main} (iii), it helps to minimize the number $r$ in the system of equations (\ref{eq}). To do this, we may assume that the associated matrix $\Sigma$ has pairwisely non-parallel rows. Indeed, if $S_i = S_j$ (respectively $S_i = S_j^c$) for some $i \neq j$, then we can combine two systems of equations $\ket{\psi}^{\Gamma(S_i)} \in D_i$ and $\ket{\psi}^{\Gamma(S_j)} \in D_j$ into a single $\ket{\psi}^{\Gamma(S_i)} \in D_i \cap D_j$ (respectively $\ket{\psi}^{\Gamma(S_i)} \in D_i \cap \bar{D}_j$). If $r \leq 3$, then it is easy to see that pairwisely non-parallel rows of $\Sigma$ are always linearly independent. Therefore, the rank condition in Proposition \ref{pr_ncri} is automatically satisfied. This is not true for $r=4$, as we have seen in Example \ref{ex_va}.

If $n=2$, then we may always assume that $r \leq 2$ by the above argument, so the rank condition is redundant. For the $n$ qubit under-determined cases, we have $r \le N_E<N_U=n$,  so the rank condition is also redundant for the three or four qubit cases because $r \leq 3$. Therefore, we have the following. The case of $n=2$ is nothing but \cite[Theorem 3, (ii)]{kye-prod-vec}.

\begin{proposition}\label{redundant}
Let $n=2$ or $d_j=2$ with $n=3,4$. Then the system of equations  {\rm (\ref{eq})} has infinitely many solutions whenever $N_E<N_U$.
\end{proposition}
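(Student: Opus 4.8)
The plan is to reduce the statement to Theorem~\ref{main}~(iii), equivalently to Proposition~\ref{pr_ncri}, by showing that in each of the three listed cases the rank hypothesis $\rk\Sigma=r$ holds automatically after a harmless normalization of the system. First I would normalize so that the rows of $\Sigma$ are pairwise non-parallel. As explained just before the statement, whenever $S_i=S_j$ (respectively $S_i=S_j^c$) for some $i\neq j$, the two constraints $\ket{\psi}^{\Gamma(S_i)}\in D_i$ and $\ket{\psi}^{\Gamma(S_j)}\in D_j$ merge into the single constraint $\ket{\psi}^{\Gamma(S_i)}\in D_i\cap D_j$ (respectively $\ket{\psi}^{\Gamma(S_i)}\in D_i\cap\bar D_j$), using $\ket{\psi}^{\Gamma(S^c)}=\overline{\ket{\psi}^{\Gamma(S)}}$. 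This merging leaves the solution set unchanged, and since $(D_i\cap D_j)^\perp=D_i^\perp+D_j^\perp$ has dimension at most $k_i+k_j$, the quantity $N_E$ does not increase; in particular $N_E<N_U$ is preserved. Discarding any trivial equation with $D_i=\mathcal H$ (that is, $k_i=0$), I may assume $k_i\geq1$ for every $i$, so that $r\leq\sum_i k_i=N_E$.

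Next I would bound $r$ case by case. For $n=2$ there are only four subsets of $[2]$, falling into the two parallel classes $\{\emptyset,\{1,2\}\}$ and $\{\{1\},\{2\}\}$, so after normalization $r\leq2$. For $n$ qubits we have $N_U=\sum_j(d_j-1)=n$, whence $r\leq N_E<n$; this gives $r\leq2$ when $n=3$ and $r\leq3$ when $n=4$. Thus in every case we reach $r\leq3$ with pairwise non-parallel rows, and it remains to verify that pairwise non-parallel vectors $\sigma_1,\dots,\sigma_r\in\{\pm1\}^n$ with $r\leq3$ are linearly independent. For $r\leq2$ this is immediate. For $r=3$, suppose $a\sigma_1+b\sigma_2+c\sigma_3=0$ nontrivially; no coefficient may vanish, else two rows would be parallel, so I may write $\sigma_3=x\sigma_1+y\sigma_2$. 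Reading off a coordinate where $\sigma_1,\sigma_2$ agree and one where they differ (both exist since $\sigma_1\neq\pm\sigma_2$) forces $x+y\in\{\pm1\}$ and $x-y\in\{\pm1\}$, hence $x,y\in\{-1,0,1\}$ with one of them zero. But $x=0$ or $y=0$ makes $\sigma_3$ parallel to $\sigma_2$ or $\sigma_1$, a contradiction. Therefore $\rk\Sigma=r$.

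With the rank condition secured, Proposition~\ref{pr_ncri} yields $P^{\bk}(\alpha)\neq0$ in $\ZZ[\alpha]/(\alpha_j^{d_j})$, and Proposition~\ref{underdet} then produces infinitely many solutions, completing the argument. The only genuinely non-formal point is the linear-algebra lemma for $r=3$, which is precisely what Example~\ref{ex_va} shows must fail once $r=4$; I therefore expect the four-qubit borderline (where $r$ can reach $3$ but no higher) together with the bookkeeping that the merging step respects both the solution set and the inequality $N_E<N_U$ to be where care is required, rather than any place harboring real difficulty.
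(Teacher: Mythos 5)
Your proposal is correct and follows essentially the same route as the paper: normalize by merging equations whose subsets are equal or complementary so that the rows of $\Sigma$ are pairwise non-parallel, bound $r$ by $2$ (for $n=2$ and three qubits) or $3$ (for four qubits), observe that pairwise non-parallel $\pm1$ rows with $r\le 3$ are linearly independent, and then invoke Propositions \ref{pr_ncri} and \ref{underdet}. The only difference is that you spell out the linear-independence lemma for $r=3$ and the bookkeeping that merging does not increase $N_E$, both of which the paper leaves implicit; these details are correct.
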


It is worthwhile to note that the converse of Proposition \ref{pr_ncri} does not hold. To see this, we consider the following two matrices in the five qubit case with $k_j=1$ for $j=1,2,3,4$:
$$
\Sigma_1= \left(\begin{matrix}
-&+&+&-&-\\
+&-&+&+&+\\
+&+&-&+&+\\
+&+&+&+&+
\end{matrix}\right)_,
\qquad \Sigma_2= \left(\begin{matrix}
-&+&+&-&+\\
+&-&+&+&-\\
+&+&-&+&+\\
+&+&+&+&+
\end{matrix}\right)_.
$$
These are of rank three. It is interesting to note that $P^\bk(\alpha)=0$ for $\Sigma_1$, but $P^\bk(\alpha)$ is nonzero for $\Sigma_2$ in the ring $\ZZ[\alpha_1, \cdots, \alpha_5]/(\alpha_1^2, \cdots, \alpha_5^2)$. Therefore, the converse of Proposition \ref{pr_ncri} does not hold.

In the trivial case where $S_i \subset [n]$ are all empty or $[n]$, $P^{\bk}(\alpha) = \pm (\alpha_1 + \cdots + \alpha_n)^{N_E}$ is always nonzero because $N_E < N_U$ and $(\alpha_1 + \cdots + \alpha_n)^{N_U} \neq 0$ in $\ZZ[\alpha]/(\alpha_j^{d_j})$ by Corollary \ref{segre}. By Proposition \ref{underdet}, the system of equations \eqref{eq} has infinitely many nonzero solutions for any $D_i$ with $\dim D_i^{\perp}=k_i$.


\section{Multi-qubit cases and Permanents of matrices}

In this section, we investigate the multi-qubit cases where $d_j=2$ for all $j$ so that $N_U=n$. In the critical case where the numbers of equations $N_E$ and unknowns $N_U$ coincide in the system of equations (\ref{eq}), we may assume that $k_i=1$ for all $i$ because if $k_i >1$ we can repeat $S_i$ $k_i$ times and replace $D_i$ by $k_i$ hyperplanes. In particular, we may assume $N_E=r=n=N_U$. By Theorem \ref{main} (ii), the solvability of (\ref{eq}) is guaranteed by the nonvanishing of the coefficient of the monomial $\alpha_1\alpha_2\cdots\alpha_n$ in the polynomial (\ref{criti}), which is
\begin{equation}\label{perm}
\sum_{\lambda\in \Sym(n)} \sigma_{1,\lambda(1)}\sigma_{2,\lambda(2)}\cdots \sigma_{n,\lambda(n)},
\end{equation}
where $\Sym(n)$ denotes the set of all permutations of the set $[n]$. If we multiply the sign of permutation in each summand, this is nothing but the determinant of the matrix $\Sigma=[\sigma_{i,j}]$. The number (\ref{perm}) is called the \emph{permanent} of the matrix $\Sigma$, which has been studied since Cauchy's era. See the monograph \cite{minc}. The permanent of $\Sigma$ will be denoted by $\per(\Sigma)$. By Theorem \ref{main} (ii), we have the following:

\begin{theorem}\label{qubit}
Let $\{S_1,\cdots,S_n\}$ be subsets of $[n]$ with the associated $n\times n$ matrix $\Sigma$, and $\{D_1,\cdots, D_n\}$ subspaces of $\bigotimes_{i=1}^n\mathbb C^2$ with $\dim D_i^\perp=1$ for $i=1,\cdots, n$, respectively. If $\per (\Sigma)\neq 0$ then the system of equations {\rm (\ref{eq})} has a nonzero solution.
\end{theorem}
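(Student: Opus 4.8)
The plan is to obtain Theorem \ref{qubit} directly from the critical case of Theorem \ref{main} (ii), so that all that remains is to identify the relevant coefficient with the permanent. First I would check that we are genuinely in the critical regime. Since every qubit factor has $d_j=2$, the number of unknowns is $N_U=\sum_j(d_j-1)=n$, while each subspace contributes $\dim D_i^\perp=1$ equation, so that $N_E=\sum_{i=1}^n k_i=n=N_U$. Hence the hypotheses of Theorem \ref{main} (ii) are satisfied, and it suffices to prove that the coefficient of $\prod_j\alpha_j^{d_j-1}$ in the polynomial (\ref{criti}) is nonzero.

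Next I would specialize (\ref{criti}) to the present data. Because $d_j-1=1$ for every $j$, the monomial $\prod_j\alpha_j^{d_j-1}$ is the squarefree top-degree monomial $\alpha_1\alpha_2\cdots\alpha_n$; and because each exponent $\dim D_i^\perp=k_i$ equals $1$, the polynomial (\ref{criti}) reduces to the product of $n$ linear forms
\[
\prod_{i=1}^n\big(\sigma_{i,1}\alpha_1+\sigma_{i,2}\alpha_2+\cdots+\sigma_{i,n}\alpha_n\big).
\]

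The heart of the argument is then a purely combinatorial expansion. Expanding this product, each term is obtained by selecting one variable $\alpha_{\lambda(i)}$ from the $i$-th factor, contributing the scalar $\sigma_{i,\lambda(i)}$, and producing the monomial $\prod_i\alpha_{\lambda(i)}$. This monomial equals $\alpha_1\cdots\alpha_n$ exactly when the selection map $i\mapsto\lambda(i)$ is a bijection of $[n]$, i.e.\ a permutation; any repeated index would force some $\alpha_j$ to appear with exponent $\geq 2$ while another is omitted, so only bijective selections survive. Summing over all permutations, the coefficient of $\alpha_1\cdots\alpha_n$ is precisely the expression (\ref{perm}), namely $\sum_{\lambda\in\Sym(n)}\sigma_{1,\lambda(1)}\cdots\sigma_{n,\lambda(n)}=\per(\Sigma)$. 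Since by hypothesis $\per(\Sigma)\neq 0$, this coefficient is nonzero, and Theorem \ref{main} (ii) yields a nonzero solution of (\ref{eq}).

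I do not expect a genuine obstacle here, as the substantive content is already packaged in Theorem \ref{main} (ii); the only point demanding care is the bookkeeping in the expansion, where one must verify that the squarefree monomial $\alpha_1\cdots\alpha_n$ is assembled solely from permutation-type selections, which is exactly what forces the coefficient to be the \emph{permanent} rather than some other symmetric function of the rows of $\Sigma$.
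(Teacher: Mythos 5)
Your proposal is correct and follows exactly the route the paper takes: it reduces Theorem \ref{qubit} to Theorem \ref{main} (ii) by checking $N_E=N_U=n$ and then identifies the coefficient of $\alpha_1\cdots\alpha_n$ in the product of linear forms with $\per(\Sigma)$ via the expression (\ref{perm}). No gaps; the bookkeeping in the multilinear expansion is handled correctly.
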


Therefore, in order to check the existence of a nonzero solution of (\ref{eq}) for the $n$ qubit cases with the same numbers of equations and unknowns, we have to calculate the permanents of the associated matrices whose entries are $\pm 1$. Several authors have studied permanents of those matrices. It was shown in \cite{wang74} that if $n \geq 2$ is even or $n \equiv 1 (\text{mod } 4)$, then there exists an $n \times n$ $(+1,-1)$-matrix $A$ with $\per(A)=0$. In the same paper, it was also noticed that there is no $3 \times 3$ $(+1,-1)$-matrix with vanishing permanent. It was proved in \cite{krauter,simion83,per05} that there exists an $n \times n$ $(+1, -1)$-matrix with vanishing permanent if and only if $n+1$ is not a power of $2$. Therefore, we have the following:

\begin{theorem}\label{37qubit}
Let $n=2^k-1$ for $k=2,3,\cdots$ and $d_i=2$ for $i=1,2,\cdots,n$. Then the system of equations {\rm (\ref{eq})} has a nonzero solution whenever the number of equations are less than or equal to $n$.
\end{theorem}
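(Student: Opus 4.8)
The plan is to reduce the whole statement to Theorem \ref{qubit} combined with the cited fact that, when $n=2^k-1$, \emph{every} $n\times n$ matrix with entries in $\{+1,-1\}$ has nonzero permanent. Writing the number of equations as $N_E=\sum_{i=1}^r k_i$ with $k_i=\dim D_i^\perp$, the hypothesis is $N_E\le n=N_U$. I would split this into the critical case $N_E=n$ and the under-determined case $N_E<n$, and treat the latter by enlarging the constraints until the system becomes critical.

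First I would dispose of the critical case $N_E=n$. As noted just before Theorem \ref{qubit}, I may rewrite each equation $\ket\psi^{\Gamma(S_i)}\in D_i$ with $k_i>1$ as $k_i$ equations $\ket\psi^{\Gamma(S_i)}\in H$ for hyperplanes $H$, by expressing $D_i$ as an intersection of $k_i$ hyperplanes. After this splitting I obtain exactly $n$ equations, and each inherits the subset $S_i$, hence the row $\sigma_i$, of the equation it came from. Collecting these rows gives an $n\times n$ matrix $\Sigma'$ with entries in $\{+1,-1\}$, in which $\sigma_i$ appears $k_i$ times. Since $n=2^k-1$, the cited permanent theorem yields $\per(\Sigma')\neq 0$, and Theorem \ref{qubit} then produces a nonzero solution $\ket\psi$, which is simultaneously a solution of the original (unsplit) system.

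Next, for the under-determined case $N_E<n$, the idea is to make the system strictly harder until it is critical. I would pick any subspace $D_1'\subseteq D_1$ with $\dim(D_1')^\perp=k_1+(n-N_E)$; such a $D_1'$ exists because $k_1+(n-N_E)=n-\sum_{i\ge 2}k_i\le n\le 2^n=\dim\mathcal H$. The modified system consisting of $\ket\psi^{\Gamma(S_1)}\in D_1'$ together with $\ket\psi^{\Gamma(S_i)}\in D_i$ for $i\ge 2$ now has $\sum_i k_i=n$ equations, so the critical case already handled gives a nonzero product vector $\ket\psi$ solving it. Since $D_1'\subseteq D_1$ forces $\ket\psi^{\Gamma(S_1)}\in D_1$, the very same $\ket\psi$ solves the original system. (If $r=0$ there are no constraints and any product vector works, so there is nothing to prove.)

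The point that must be gotten right is that the permanent input is applied to the matrix $\Sigma'$ formed \emph{after} splitting, which typically has repeated rows. I expect this to look like the main obstacle, but it is in fact harmless: the cited result asserts nonvanishing of the permanent for \emph{all} $(+1,-1)$-matrices of size $n=2^k-1$, with no hypothesis that the rows be distinct. Consequently the rank obstruction that surfaced in the under-determined Proposition \ref{pr_ncri} and in Example \ref{ex_va} is bypassed entirely. The only remaining things to verify are routine: the monotonicity observation that shrinking $D_1$ to $D_1'$ can only shrink the solution set, and the trivial dimension inequality guaranteeing the existence of $D_1'$.
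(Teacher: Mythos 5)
Your proof is correct and follows essentially the same route as the paper, which derives the theorem from Theorem \ref{qubit} together with the cited fact that every $(2^k-1)\times(2^k-1)$ matrix with entries $\pm1$ has nonzero permanent, after the splitting of each $D_i$ into $k_i$ hyperplanes described at the start of Section~3. The paper leaves the under-determined case implicit, and your reduction of it to the critical case by shrinking $D_1$ (valid precisely because the permanent result needs no distinctness or rank hypothesis on the rows) is the intended and correct way to fill that gap.
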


The above theorem does not hold even for the two qubit case with $n=2$, as it was discussed in \cite{kye-prod-vec}. We recall the following typical example. Let $\{ \ket{0}, \ket{1} \}$ be an orthonormal basis for $\CC^2$. Let $\ket{\beta_1} = \ket{00} + \ket{11}$ and $\ket{\beta_2} = \ket{01} - \ket{10}$. For two nonzero vectors $\ket{\psi_1}, \ket{\psi_2}$ in $\CC^2$, we have
$$
\braket{\psi_1 ,\bar{\psi}_2}{\beta_1}
= \braket{\psi_1}{0}\braket{\bar{\psi}_2}{0} + \braket{\psi_1}{1}\braket{\bar{\psi}_2}{1}
=\braket{\psi_1}{\psi_2}.
$$

Therefore, we see that the equation $\braket{\psi_1 , \bar{\psi}_2}{\beta_1} =0$ is equivalent to the orthogonality of $\ket{\psi_1}$ and $\ket{\psi_2}$. Similarly, the equation $\braket{\psi_1 ,\psi_2}{\beta_2} =0$ is equivalent to saying that $\ket{\psi_1}$ and $\ket{\psi_2}$ are parallel. If we put $D_1=\ket{\beta_1}^\perp$ and $D_2=\ket{\beta_2}^\perp$ then the system of equations
$$
\begin{aligned}
\ket{\psi_1, \bar{\psi}_2}\in D_1\\
\ket{\psi_1, {\psi}_2}\in D_2
\end{aligned}
$$
has no nonzero solution. Note that the associated matrix is given by
$$
\left(\begin{matrix}
+&-\\ +&+\end{matrix}\right)
$$
with vanishing permanent. We modify this example to get the same kind of a system of equations for the four qubit case with the same number of equations and unknowns.

\begin{example}\label{4_qu}
Let subspaces $\{D_1,D_2,D_3,D_4\}$ of $\bigotimes_{j=1}^4\mathbb C^2$ be given by
$$
D_1 = (\ket{\beta_1}\ts \ket{\beta_1})\orth,\
D_2 = (\ket{\beta_1}\ts \ket{\beta_2})\orth,\
D_3 = (\ket{\beta_2}\ts \ket{\beta_1})\orth,\
D_4 = (\ket{\beta_2}\ts \ket{\beta_2})\orth.
$$
Then, we have
\begin{equation}\label{eq:4qubit}
\begin{split}
\ket{\psi_1, \bar{\psi}_2} \ts \ket{\psi_3, \bar{\psi}_4} \in D_1 \quad
   &\Longleftrightarrow \quad \ket{\psi_1} \perp \ket{\psi_2} \quad \text{or} \quad \ket{\psi_3} \perp \ket{\psi_4} \\
\ket{\psi_1, \bar{\psi}_2} \ts \ket{\psi_3, \psi_4} \in D_2 \quad
   &\Longleftrightarrow \quad \ket{\psi_1} \perp \ket{\psi_2} \quad \text{or} \quad \ket{\psi_3} \parallel \ket{\psi_4} \\
\ket{\psi_1, \psi_2} \ts \ket{\psi_3, \bar{\psi_4}} \in D_3 \quad
   &\Longleftrightarrow \quad \ket{\psi_1} \parallel \ket{\psi_2} \quad \text{or} \quad \ket{\psi_3} \perp \ket{\psi_4} \\
\ket{\psi_1, \psi_2} \ts \ket{\psi_3, \psi_4} \in D_4 \quad
   &\Longleftrightarrow \quad \ket{\psi_1} \parallel \ket{\psi_2} \quad \text{or} \quad \ket{\psi_3} \parallel \ket{\psi_4}
\end{split}
\end{equation}

It is clear that there exists no nonzero product vector $\ket{\psi_1, \psi_2, \psi_3, \psi_4}\in\bigotimes_{j=1}^4\mathbb C^2$ satisfying all of these equations. Note that the the associated matrix is
$$
\begin{pmatrix}
+ & - & + & - \\ + & - & + & + \\ + & + & + & - \\ + & + & + & +
\end{pmatrix}_,
$$
which has the vanishing permanent. If we take the last three columns then it is equivalent to the associated matrix in Example \ref{ex_va}. Employing the above method to construct the example for $n=4$ from the example for $n=2$, it is easy to construct the same kind of examples when $n=2^k$ for $k=3,4,\cdots$.
\end{example}

We say that two $r\times n$ matrices $\Sigma_1$ and $\Sigma_2$ are equivalent if $\Sigma_2$ is obtained from $\Sigma_1$ by a succession of the following operations:
\begin{enumerate}[(i)]
\item
interchange two rows or columns,
\item
negate a row or a column.
\end{enumerate}

Interchanging two rows and columns is equivalent to changing the orders of equations and unknowns in (\ref{eq}), and negating a row or a column is equivalent to conjugating an equation or an unknown in (\ref{eq}). Therefore, two systems of equations like (\ref{eq}) have the same solvability if their associated matrices are equivalent.

It is a natural problem to classify all $n\times n$ $(+1, -1)$-matrices with vanishing permanents, up to equivalence. The first step for classification is to reduce the number $\mu(\Sigma)$ of minus signs, that is, the number of $-1$'s in the entries of $\Sigma$. We also denote by $r_i(\Sigma)$ (respectively $c_j(\Sigma)$) the number of minus signs in the $i$-th row (respectively the $j$-th column) of $\Sigma$.

\begin{proposition}\label{vanish}
Suppose that $n\ge 3$. For a given $n\times n$ matrix $\Sigma=[\sigma_{ij}]$ with entries $\pm 1$, we have the following:
\begin{enumerate}
\item[{\rm (i)}]
If $n=2m+1$ is an odd number and $\mu(\Sigma)\ge mn-(m-1)$, then there exists $\Sigma^\prime$ which is equivalent to $\Sigma$ such that $\mu(\Sigma^\prime)<\mu(\Sigma)$.
\item[{\rm (ii)}]
If $n=2m$ is an even number and $\mu(\Sigma)\ge mn-m$, then there exists $\Sigma^\prime$ which is equivalent to $\Sigma$ such that $\mu(\Sigma^\prime)<\mu(\Sigma)$.
\end{enumerate}
\end{proposition}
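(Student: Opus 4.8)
The plan is to argue by contradiction through an extremal choice. First I note that, since negating a row and negating a column commute and row/column interchanges leave $\mu$ unchanged, every matrix equivalent to $\Sigma$ is obtained by negating some set of rows and some set of columns; hence the statement is equivalent to showing that the \emph{minimum} value of $\mu$ over the equivalence class of $\Sigma$ is strictly smaller than the stated threshold. Indeed, if this minimum is below the threshold, then any representative with $\mu\ge\text{threshold}$ cannot be minimal and is therefore reducible. So I would fix a matrix $\Sigma$ of minimal $\mu$ in its class, assume for contradiction that $\mu(\Sigma)$ is at least the threshold, and derive a contradiction for $n\ge 3$ odd and $n\ge 4$ even. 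Writing $m=\floor{n/2}$, minimality under negating a single row or column immediately forces $r_i(\Sigma)\le m$ and $c_j(\Sigma)\le m$ for all $i,j$, so that $\mu=\sum_j c_j$ and $\min(c_j,n-c_j)=c_j$.

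The heart of the argument is a two-step perturbation: negate one row $i$, then negate every column whose number of minus signs now exceeds $n/2$. Negating row $i$ sends $c_j\mapsto c_j-1$ when $\sigma_{ij}=-1$ and $c_j\mapsto c_j+1$ when $\sigma_{ij}=+1$, producing new counts $c_j^{(i)}$; the subsequent optimal column negations yield $\sum_j\min(c_j^{(i)},n-c_j^{(i)})$ minus signs, which by minimality is $\ge\mu$. Computing this sum entry-by-entry and splitting the columns of row $i$ into those with $\sigma_{ij}=-1$, those with $\sigma_{ij}=+1$ and $c_j\le m-1$, and the \emph{full} columns with $\sigma_{ij}=+1$ and $c_j=m$, one obtains a per-row inequality. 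Here the parity of $n$ enters decisively: a full column has $c_j^{(i)}=m+1$, contributing $n-(m+1)=m-1=c_j-1$ when $n=2m$ but $m=c_j$ when $n=2m+1$. Thus in the even case minimality gives (number of full columns met by the $+$ entries of row $i$) $\le m-r_i$, while in the odd case the full columns are neutral and one only gets that the $+$ entries of row $i$ in non-full columns number at least $r_i$, i.e. at most $n-2r_i$ of them lie in full columns.

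Summing the per-row inequality over $i$ and re-indexing the count over columns, I would let $q$ be the number of full columns and rewrite $\sum_i(\text{full }+\text{ entries in row }i)=\sum_{c_j=m}(n-c_j)=(n-m)q$. This produces an upper bound of the form $\mu\le mn-mq$ in the even case and $2\mu\le n^2-(m+1)q$ in the odd case (call this bound (A)). Pairing it with the trivial count $\mu=\sum_j c_j\le qm+(n-q)(m-1)=(m-1)n+q$ (bound (B)) and with the standing assumption $\mu\ge\text{threshold}$, bound (B) forces $q$ to be large while bound (A) forces $q$ to be small: in the even case (B) gives $q\ge m$ but (A) gives $q\le 1$, contradicting $m\ge 2$; in the odd case (B) gives $q\ge m+2$ but (A) gives $(m+1)q\le 4m-1$, which is impossible since $(m+1)(m+2)=m^2+3m+2>4m-1$ for every $m\ge 1$.

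I expect the main obstacle to be the second step: bookkeeping the effect of the row-plus-columns perturbation correctly, in particular isolating the behaviour of the full columns and recognising that the even/odd distinction in their contribution is exactly what upgrades the trivial estimate $\mu\le mn$ to the sharp thresholds $mn-m$ and $mn-(m-1)$. Once the per-row inequality is pinned down, the concluding double count and the resulting contradiction are routine.
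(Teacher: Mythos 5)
Your argument is correct; I verified the key steps. Minimality of $\mu$ over the class (after the reduction to $r_i,c_j\le m$) under the move ``negate row $i$, then renegate any column whose count exceeds $n/2$'' does yield $f_i\le m-r_i$ for $n=2m$ and $f_i\le n-2r_i$ for $n=2m+1$, where $f_i$ is the number of columns with $c_j=m$ met by $+1$ entries of row $i$; summing against $\sum_i f_i=(n-m)q$ and combining with $\mu\le (m-1)n+q$ gives exactly the contradictions you state ($q\ge m$ versus $q\le 1$ when $n=2m\ge 4$, and $q\ge m+2$ versus $(m+1)q\le 4m-1$ when $n=2m+1$). This is a genuinely different route from the paper's. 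The paper argues constructively: after the same reduction it uses the bound $\mu\le mn-n+|J|$ to force at least $m+2$ (odd), respectively $m+1$ or $m$ (even), columns with exactly $m$ minus signs, picks a single row with $m$ minus signs, locates enough $+1$ entries of that row in such columns, and writes down an explicit negation of one row and one or two columns that strictly lowers $\mu$; the even boundary case $\mu=mn-m$ requires a separate sub-case (all entries of $I\times J$ equal to $-1$) handled by negating every row of $I$ and then a suitable column outside $J$. Your extremal double-counting argument replaces the explicit moves and this sub-case analysis by one uniform per-row inequality, which is arguably cleaner and treats both parities symmetrically; what it gives up is the explicit sequence of operations realizing the decrease, which the paper's proof provides. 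The two proofs share the reduction to $r_i,c_j\le m$ and the counting bound forcing many full columns; your bound (A) is the genuinely new ingredient.
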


\begin{proof}
If there is a column with $m+1$ minus signs then we may decrease the number $\mu(\Sigma)$ strictly by negating this column, and the same for rows. Therefore, it remains to consider the case when all the columns and rows have at most $m$ minus signs. Put
$$
I=\{i\in[n]: r_i(\Sigma)=m\},\qquad J=\{j:\in [n]: c_j(\Sigma)=m\}.
$$

We note that if $|I|\le\ell$ then
$$
\mu(\Sigma)\le m\cdot|I|+(m-1)(n-|I|)=mn-n+|I|\le mn-n+\ell,
$$
and the same for $J$, where $|I|$ denotes the cardinality of $I$. Therefore, we have
$$
\mu(\Sigma)\ge mn-n+\ell\ \Longrightarrow\ |I|\ge\ell,\ |J|\ge \ell.
$$

In case of (i), we have $mn-(m-1)=mn-n+(m+2)$, and so it follows that $|J|\ge m+2$ by assumption. Therefore, for any $i\in I$, there exist at least two $j\in J$, say $\{j_1,j_2\}$, with $\sigma_{ij}=+1$. Take any $i\in I$ and negate the $i$-th row, to get $\Sigma^\pr$ with $\mu(\Sigma^\pr)=\mu(\Sigma)+1$. If we negate the $j_1$-th and $j_2$-th columns to get $\Sigma^{\pr\pr}$, then we have $\mu(\Sigma^{\pr\pr})\le \mu(\Sigma^\pr)-2 =\mu(\Sigma)-1$.

In the even case $n=2m$, we first consider the case $\mu(\Sigma)\ge mn-(m-1)=mn-n+(m+1)$. In this case, we have $|J|\ge m+1$, and so for any $i\in I$ there exists at least one $j\in J$ with $\sigma_{ij}=+1$. We apply the same argument as in the odd $n$ case, to get $\Sigma^\pr$ and $\Sigma^{\pr\pr}$. In this case, we have $\mu(\Sigma^{\pr\pr})\le \mu(\Sigma^\pr)-1=\mu(\Sigma)-1$.

It remains to prove when $n=2m$ and $\mu(\Sigma)= mn-m$, which implies $|I|\ge m$ and $|J|\ge m$. In this case, we consider the set $I\times J$. If there exists $(i,j)\in I\times J$ with $\sigma_{ij}=+1$ then negate the $i$-row and the $j$-th column, to get the conclusion. If $\sigma_{ij}=-1$ for each $(i,j)\in I\times J$ then we see that $|I|=|J|=m$. In this case we negate the $i$-th row for each $i\in I$ to get $\Sigma^\prime$. Then there exist $j\in [n]\setminus J$ such that $c_j(\Sigma^\pr)>m$ since $\mu(\Sigma)>|I\times J|$ by the assumption $n\ge 3$. Negate this column to get the required conclusion.
\end{proof}

When $n$ is a power of $2$, the following proposition is also useful for classification of $(+1,-1)$-matrices with vanishing permanents. We recall the following addition formula for permanents:

$$
\per(A+B) = \sum_{i=0}^n \sum_{{S,T \in [n]} \atop {|S|=|T|=i}} \per(A[S|T]) \per(B(S|T)),
$$
where $A[S|T]$ is the submatrix of $A$ consisting of rows indexed by $S$ and columns indexed by $T$, and $B(S|T)$ is the submatrix of $B$ deleting rows indexed by $S$ and columns indexed by $T$. If $|S|=|T|=0$ (respectively $|S|=|T|=n$), we set $\per(A[S|T])=1$ (respectively $\per(B(S|T))=1$). See \cite[Chapter 2, Theorem 1.4]{minc}. This formula holds for arbitrary $n \times n$ matrices $A$ and $B$.

\begin{proposition}\label{even}
Suppose that $n=2^k$ for $k=2,3,\cdots$. If an $n\times n$ matrix $\Sigma$ with entries $\pm 1$ has the vanishing permanent, then $\mu(\Sigma)$ must be even.
\end{proposition}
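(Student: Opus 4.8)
The plan is to expand $\per(\Sigma)$ using the addition formula just recalled and then read off the parity of $\mu(\Sigma)$ from a $2$-adic valuation estimate. Writing $J$ for the all-$(+1)$ matrix and $N=[N_{ij}]$ for the $(0,1)$-matrix with $N_{ij}=1$ exactly when $\sigma_{ij}=-1$, we have $\Sigma=J-2N$ and $\mu(\Sigma)=\sum_{i,j}N_{ij}$. I would apply the addition formula with $A=-2N$ and $B=J$; since $\per((-2N)[S|T])=(-2)^{|S|}\per(N[S|T])$ and $\per(J(S|T))=(n-i)!$ for $|S|=|T|=i$, this gives
\begin{equation}\label{eq:perexp}
\per(\Sigma)=\sum_{i=0}^{n}(-2)^{i}R_{i}\,(n-i)!,\qquad R_{i}:=\sum_{|S|=|T|=i}\per(N[S|T])\in\ZZ_{\ge0}.
\end{equation}
Directly from the definition, $R_{0}=1$, and $R_{1}=\mu(\Sigma)$ is the number of $-1$ entries of $\Sigma$.

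Next I would estimate the $2$-adic valuation $v_{2}$ of each summand of \eqref{eq:perexp}; the key point is that only the $i=1$ term sees the parity of $\mu(\Sigma)$. By Legendre's formula $v_{2}(m!)=m-s_{2}(m)$, where $s_{2}(m)$ denotes the binary digit sum, and since $n=2^{k}$ gives $s_{2}(n)=1$ and $s_{2}(n-1)=k$, the first two terms have
$$v_{2}\big(R_{0}\,n!\big)=n-1,\qquad v_{2}\big(2R_{1}(n-1)!\big)=1+v_{2}(\mu(\Sigma))+(n-1)-k=n-k+v_{2}(\mu(\Sigma)).$$
For $2\le i\le n$ one has $0\le n-i\le 2^{k}-2$, and the only integer of $[1,2^{k}-1]$ with binary digit sum $k$ is $2^{k}-1$ itself; hence $s_{2}(n-i)\le k-1$ and
$$v_{2}\big((-2)^{i}R_{i}(n-i)!\big)\ \ge\ i+(n-i)-s_{2}(n-i)=n-s_{2}(n-i)\ \ge\ n-k+1,$$
with the convention $v_{2}(0)=+\infty$ for any term in which $R_{i}=0$.

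Finally I would argue the contrapositive. Suppose $\mu(\Sigma)$ is odd, so $v_{2}(\mu(\Sigma))=0$ and the $i=1$ term of \eqref{eq:perexp} has valuation exactly $n-k$. Because $k\ge2$, this is strictly smaller than the $i=0$ valuation $n-1$ and than the bound $n-k+1$ valid for all $i\ge2$, so the $i=1$ term is the unique summand of minimal $2$-adic valuation. The ultrametric property of $v_{2}$ — the valuation of a sum equals the minimum as soon as that minimum is attained exactly once — then yields $v_{2}(\per(\Sigma))=n-k<\infty$, whence $\per(\Sigma)\ne0$; equivalently $\per(\Sigma)=0$ forces $\mu(\Sigma)$ even. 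I expect this strict separation of valuations to be the main point of the argument, and it is the only place the hypothesis $n=2^{k}$ with $k\ge2$ is used: it forces $s_{2}(n)=1$ and $s_{2}(n-1)=k$, which is exactly what pushes the $i=1$ valuation below all the others. For $k=1$ the $i=0$ and $i=1$ valuations both equal $n-1$, the separation collapses, and the $2\times2$ sign matrix with a single $-1$ (vanishing permanent, $\mu=1$ odd) shows that the statement genuinely fails there.
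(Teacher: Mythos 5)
Your proposal is correct and follows essentially the same route as the paper: the decomposition $\Sigma=J-2N$, the addition formula giving $\per(\Sigma)=\sum_{i=0}^n(-2)^iR_i(n-i)!$, and the $2$-adic estimate $s_2(n-i)\le k-1$ for $i\ge 2$ isolating the $i=1$ term are all exactly the paper's argument. The only cosmetic difference is that you phrase the conclusion via Legendre's formula and the ultrametric property of $v_2$, where the paper states the equivalent congruence $\per\Sigma\equiv 2^{n-k}\ell\,\mu(\Sigma)\pmod{2^{n-k+1}}$.
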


\begin{proof}
We write the $n \times n$ matrix $\Sigma = [ \sigma_{i,j} ]$ as $J-2P$ where $J$ is the matrix whose entries are all $+1$ and $P$ is a uniquely determined matrix whose entries are $0$ or $+1$. By the addition formula, we obtain the formula
$$
\per(\Sigma) = \sum_{i=0}^n (-2)^i (n-i)!\, \per_i(P),
$$
where $\per_i (P)$ is the sum of all permanents of $i \times i$ submatrices of $P$. See \cite{simion83}. The largest natural number $N_i$ such that $2^{N_i}$ divides the $i$-th summand $(-2)^i (n-i)!$ is given by
$$
N_i=\begin{cases}
n-1,\quad &i=0,\\
n-k, &i=1,\\
i+ \sum_{j=1}^k \floor{\frac{n-i}{2^j}}
&i=2,3,\cdots,n,
\end{cases}
$$
where $\floor{x}$ is the largest integer which is not greater than $x$. We show that $N_i\ge n-k+1$ for $i=2,3,\cdots,n$. Let $n-i = a_{k-1} 2^{k-1} + a_{k-2} 2^{k-2} + \cdots + a_0$ be the $2$-adic expansion of $n-i$. Then we have $\sum_{j=0}^{k-1} a_j\le k-1$, because some of $a_i$ must be zero by $i\ge 2$. It is easy to see
$$
\sum_{j=1}^k \floor{\dfrac{n-i}{2^j}}=n-i-\sum_{j=0}^{k-1}a_j.
$$
Therefore, we have $N_i=n-\sum_{j=0}^{k-1}a_j \ge n-k+1$, and so
$$
\per\Sigma
\equiv (-2)(n-1)! \cdot \per_1(P)
\equiv 2^{n-k}\ell\cdot \per_1(P) \mod 2^{n -k +1},
$$
where $\ell$ is an odd number. Since $\per\Sigma=0$, we see that $\mu(\Sigma)=\per_1(P)$ must be an even number.
\end{proof}

In order to classify $4\times 4$ $(+1,-1)$-matrices with vanishing permanents up to equivalence, we may consider only the cases $\mu=2$ and $\mu=4$, by Propositions \ref{vanish} and \ref{even}. In the case of $\mu=2$, one can check that we have only two permanent vanishing matrices up to equivalence:
$$
\Sigma_1
=\left(\begin{matrix}
-&-&+&+\\
+&+&+&+\\
+&+&+&+\\
+&+&+&+
\end{matrix}\right)_,
$$
and its transpose $\Sigma_1^\ttt$.

In the case of $\mu=4$, we have to investigate the following cases:
\begin{enumerate}[(i)]
\item there are two rows with two $-1$'s,
\item there are one row with two $-1$'s and two rows with one $-1$,
\item there are four rows with one $-1$.
\end{enumerate}
In the case of (i), there is only one matrix with vanishing permanent up to equivalence:
$$\Sigma_2 = \left(\begin{matrix}
-&-&+&+\\
+&-&-&+\\
+&+&+&+\\
+&+&+&+
\end{matrix}\right).$$
In the case of (ii), there are only three matrices with vanishing permanents up to equivalence:
$$\Sigma_2^\ttt
= \left(\begin{matrix}
-&+&+&+\\
-&-&+&+\\
+&-&+&+\\
+&+&+&+
\end{matrix}\right), \qquad
\Sigma_3
=\left(\begin{matrix}
-&-&+&+\\
+&-&+&+\\
+&+&-&+\\
+&+&+&+
\end{matrix}\right),\qquad
\Sigma_4
=\left(\begin{matrix}
-&-&+&+\\
+&+&-&+\\
+&+&+&-\\
+&+&+&+
\end{matrix}\right).
$$
We note that $\Sigma_2^\ttt$ is equivalent to the associated matrix in Example \ref{4_qu} and the transpose of $\Sigma_3$ is equivalent to $\Sigma_3$ itself. In the case of (iii), there is only one matrix $\Sigma_4^\ttt$ with vanishing permanent. If we negate the first row of $\Sigma_4$ and rearrange the rows and columns appropriately, then we get the matrix $\Sigma_2^\ttt$. Therefore, $\Sigma_4$ is equivalent to $\Sigma_2^\ttt$. This implies that $\Sigma_4^\ttt$ is also equivalent to $\Sigma_2$. To summarize, we have at most five inequivalent $(+1,-1)$-matrices with vanishing permanents up to equivalence:

$$
\Sigma_1, \quad \Sigma_1^\ttt, \quad \Sigma_2, \quad \Sigma_2^\ttt, \quad \Sigma_3.
$$

We claim that these five matrices are inequivalent. Since $\Sigma_i$ and $\Sigma_i^t$ have rank $i+1$ for $i=1,2,3$, we find that neither $\Sigma_i$ nor $\Sigma_i^\ttt$ is equivalent to $\Sigma_j$ or $\Sigma_j^\ttt$ if $i \neq j$. It remains to show that $\Sigma_1$ (respectively $\Sigma_2$) and $\Sigma_1^\ttt$ (respectively $\Sigma_2^\ttt$) are not equivalent.

In other to get another invariant to distinguish them, we consider the difference $\pi_r(\Sigma)$ of the two numbers $|\{i\in [n]: r_i(\Sigma)\ {\text{\rm is even}}\}|$ and $|\{i\in [n]: r_i(\Sigma)\ {\text{\rm is odd}}\}|$ for an $n\times n$ matrix $\Sigma$ with entries $\pm 1$. If $n$ is even then it is easily checked that the number $\pi_r(\Sigma)$ is an invariant under the equivalence relation. The number $\pi_c(\Sigma)$ may be defined for columns in the same way. Since $\pi_r(\Sigma_1) = 4$ and $\pi_r(\Sigma_1^\ttt) = 0$, $\Sigma_1$ and $\Sigma_1^\ttt$ are not equivalent. Similarly, we also check $\pi_r(\Sigma_2) = 4$ and $\pi_r(\Sigma_2^\ttt) = 0$, to confirm that  $\Sigma_2$ is not equivalent to $\Sigma_2^\ttt$.

\begin{theorem}
There exist exactly five $4\times 4$ $(+1,-1)$-matrices $\Sigma_1, \Sigma_1^\ttt, \Sigma_2,\Sigma_2^\ttt,\Sigma_3$ with vanishing permanents, up to the equivalence relation.
\end{theorem}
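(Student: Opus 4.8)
The statement collects the outcome of the preceding case analysis, so the plan is to organize that discussion into three clearly separated stages: reducing the number $\mu(\Sigma)$ of minus signs, enumerating representatives for each admissible value of $\mu$, and proving that the five resulting matrices are pairwise inequivalent. Note first that the equivalence moves preserve vanishing of the permanent (row or column permutations leave $\per$ unchanged, while negating a line multiplies it by $-1$), so it is legitimate to work up to equivalence throughout.

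First I would pin down the possible values of $\mu$. Since vanishing of the permanent is an equivalence invariant, Proposition \ref{even} forces $\mu(\Sigma)$ to be even in any vanishing-permanent class, and the all-$(+1)$ matrix has $\per = 4! = 24 \ne 0$, which rules out $\mu = 0$. Applying Proposition \ref{vanish}(ii) with $n = 4$ and $m = 2$ shows that any matrix with $\mu \ge mn - m = 6$ is equivalent to one with strictly fewer minus signs; iterating, every class admits a representative with $\mu \le 5$, hence with $\mu \in \{2,4\}$ by parity. Thus it suffices to classify representatives at $\mu = 2$ and at $\mu = 4$.

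Second, I would run the enumeration within each value of $\mu$. Following the proof of Proposition \ref{vanish}, I may further assume that every row and column carries at most $m = 2$ minus signs. For $\mu = 2$ the two minus signs either lie in a common line (row or column) or in general position; a direct permanent computation via the expansion $\per(\Sigma) = \sum_i (-2)^i (n-i)!\,\per_i(P)$ used in Proposition \ref{even} shows the general-position case has nonzero permanent, leaving only $\Sigma_1$ and its transpose $\Sigma_1^\ttt$. For $\mu = 4$ the row-distribution of the four minus signs is one of $(2,2,0,0)$, $(2,1,1,0)$, $(1,1,1,1)$, giving the three cases (i)--(iii); in each case I would use row and column permutations to reduce to a short list of candidates, compute their permanents to confirm which vanish, and then detect the hidden equivalences $\Sigma_4 \sim \Sigma_2^\ttt$ and $\Sigma_4^\ttt \sim \Sigma_2$ by the explicit moves (negate a row, then permute) indicated in the text. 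This leaves exactly $\Sigma_2$, $\Sigma_2^\ttt$ and $\Sigma_3$ at $\mu = 4$.

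Finally I would establish pairwise inequivalence of $\Sigma_1, \Sigma_1^\ttt, \Sigma_2, \Sigma_2^\ttt, \Sigma_3$ using two equivalence invariants. Since row and column permutations and negations preserve rank, and $\Sigma_i, \Sigma_i^\ttt$ have rank $i+1$, matrices with different $i$ cannot be equivalent. To separate $\Sigma_1$ from $\Sigma_1^\ttt$ and $\Sigma_2$ from $\Sigma_2^\ttt$, which share the same rank, I would invoke the parity invariant $\pi_r$, well defined for even $n$, and record $\pi_r(\Sigma_1) = 4 \ne 0 = \pi_r(\Sigma_1^\ttt)$ and $\pi_r(\Sigma_2) = 4 \ne 0 = \pi_r(\Sigma_2^\ttt)$. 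The main obstacle is the second stage: the $\mu = 4$ enumeration requires careful bookkeeping over the equivalence moves to ensure that no vanishing-permanent matrix is overlooked and that the cross-case coincidences are correctly identified; by contrast, the final inequivalence step is comparatively routine once rank and $\pi_r$ are in hand.
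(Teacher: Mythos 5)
Your proposal is correct and follows essentially the same route as the paper: reduction to $\mu\in\{2,4\}$ via Propositions \ref{vanish} and \ref{even}, enumeration by the row distribution of minus signs with the identifications $\Sigma_4\sim\Sigma_2^\ttt$ and $\Sigma_4^\ttt\sim\Sigma_2$, and pairwise inequivalence via the rank and the parity invariant $\pi_r$. The only difference is that you make explicit some steps the paper leaves as "one can check" (the permanent computations at $\mu=2$ and the exclusion of $\mu=0$), which is a faithful filling-in rather than a new approach.
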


We have considered the rank and the invariant $\pi_r(\Sigma)$ to classify permanent vanishing $(+1,-1)$-matrices in the $4 \times 4$ cases. The absolute values of the determinant and permanent are also obvious invariants under the equivalence relation. The following example shows that these do not constitute a complete set of invariants.

\begin{example}
Consider the following two matrices:
$$A= \begin{pmatrix}
- & - & + & + \\ + & - & - & + \\ - & + & - & + \\ + & + & + & +
\end{pmatrix}_,\quad B= \begin{pmatrix}
+ & + & + & + \\ + & - & + & - \\ + & + & - & - \\ + & - & - & +
\end{pmatrix}_.$$
We can check that
$$
\begin{aligned}
&\per(A) = \per(B) = 8,\\
&|\det(A)|=|\det(B)|=16,\\
&\rk(A)=\rk(B)=4,\\
&\pi_r(A)=\pi_r(B)=\pi_c(A)=\pi_c(B)=4.
\end{aligned}
$$
Note that $B B^t = 4 I_4$, where $I_4$ is the $4 \times 4$ identity matrix. It is easy to see that if $B\pri$ is equivalent to $B$, then $B\pri {B\pri}^t$ is also $4$ times the identity matrix. Since $A A^t \neq 4 I_4$, $A$ is not equivalent to $B$.
\end{example}

We close this section by mentioning an interesting asymptotic result on permanents by  Tao and Vu \cite{tao}. For the $n\times n$ matrix $M_n$ whose entries are independent and identically distributed random variables taking values $\pm 1$ with probability $1/2$ for each, they showed that asymptotically almost surely, the absolute value of $\per(M_n)$ is $n^{(\frac 12+ o(1))n}$. In particular, the probability that $\per(M_n)= 0$ tends to $0$, as $n\to\infty$.

\section{PPT entangled edge states and related questions}

An $n$-partite PPT entangled state is said to be an \emph{edge state} if there exists no nonzero product vector $\ket\psi$ such that $\ket{\psi}^{\Gamma(S)}\in{\mathcal R}(\varrho^{T(S)})$ for every subset $S$ of $[n]$, where ${\mathcal R}(\varrho)$ denotes the range of $\varrho$. Edge states play an important role in understanding the structure of the convex set of all PPT states, because every PPT state is the convex combination of a separable state and an edge state. Furthermore, every extreme point of the convex set of all PPT states must be a pure product state or an edge state.

The first step to classifying PPT entangled edge states is to  consider ranks of them and their partial transposes. In the $d_1\ot d_2\ot\cdots d_n$ system, we have to consider $2^{n-1}$ subsets $\{S_i\}$ of $[n]$, as it was discussed in Introduction. In this case, we have to solve the system of equations
\begin{equation}\label{edge}
\ket{\psi}^{\Gamma(S_i)}\in{\mathcal R}(\varrho^{T(S_i)}),\qquad i=1,2,\cdots, 2^{n-1},
\end{equation}
in order to check if a given PPT state $\varrho$ is an edge state or not. We first consider the following statement:

\begin{enumerate}
\item[(E${}_1$)]
If the number of equations $N_E$ is less than or equal to the number of unknowns $N_U$ in the system of equations (\ref{eq}), then there exists a nonzero solution.
\end{enumerate}
The validity of the statement (E${}_1$) for the bi-partite case has been discussed in \cite{kye-prod-vec}. It is related with Diophantine equations arising from the Krawtchouk polynomials, which play a role in the coding theory \cite{MWS,vint}. It is not yet solved completely. See also \cite[Section 7]{kye_ritsu}. Theorem \ref{37qubit} tells us that (E${}_1$) is true for the $n$ qubit systems when $n=2^k-1$ with $k=2,3,\cdots$. On the other hand, Example \ref{4_qu} shows that (E${}_1$) does not hold for $n$ qubits with $n=2^k$.

We note that the number of equations and unknowns in (\ref{edge}) are given by
$$
\sum_{i=1}^{2^{n-1}}\left(\prod_{j=1}^n d_j-\rk \varrho^{\Gamma(S_i)}\right) \quad
{\text{\rm and}}\quad
\sum_{j=1}^n(d_j-1),
$$
respectively. Therefore, the statement (E${}_1$) implies the following:

\begin{enumerate}
\item[(E${}_2$)]
If $\varrho$ is a PPT entangled edge state then we have
\begin{equation}\label{edge-ine}
\sum_{i=1}^{2^{n-1}}\rk \varrho^{T(S_i)}< 2^{n-1}\prod_{j=1}^n d_j-\sum_{j=1}^n(d_j-1).
\end{equation}
\end{enumerate}
In the bi-partite case $M\ot N$, we have the inequality
$$
\rk\varrho+\rk\varrho^T < 2MN-M-N+2,
$$
where $\varrho^T$ denotes the partial transpose of the bi-partite state $\varrho$. In the $2\ot 2$ and $2\ot 3$ systems, the statement (E${}_2$) is vacuously true by the Woronowicz-Horodecki criterion \cite{horo-1,woronowicz} which says that every PPT state must be separable in these cases. The $2\ot 2$ case goes back to the St\o rmer's work \cite{stormer} in the sixties to classify extremal positive maps between $M_2$, together with the duality  \cite{eom-kye,horo-1,woronowicz} between positive maps and bi-partite entanglement. We note that the statement (E${}_1$) is false for the $2\ot 2$ case, as it was discussed in the last section. The validity of (E${}_2$) is still open for the $2\ot 4$ case. See \cite[Section 7]{kye_ritsu}. In the $3\ot 3$ system, the statement (E${}_2$) is false. Actually, $3\ot 3$ PPT entangled edge states $\varrho$ have been constructed in \cite{kye_osaka} with $\rk\varrho=8$ and $\rk\varrho^T=6$.

On the other hand, we see that the inequality (\ref{edge-ine}) becomes
$$
\sum_{i=1}^{2^{n-1}}\rk\varrho^{T(S_i)}< 2^{2n-1}-n,
$$
for the $n$ qubit cases, and get the inequality (\ref{3qu}) for the three qubit case. We do not know if this is true for the $2^k$ qubit cases, even though the statement (E${}_1$) is false in these cases by Example \ref{4_qu}. We summarize in Table 1.

\renewcommand{\arraystretch}{1.5}
\begin{table}
\begin{center}
\begin{tabular}{|c|c|c|c|c|c|c|c|}
\hline
& \multicolumn{4}{|c|}{bi-partite cases ($k \geq 2$)} &
\multicolumn{3}{|c|}{$n$-qubit cases ($k \geq 2$)} \\\cline{2-8}
& $2 \ts 2$ & $2 \ts 2k$ & $2 \ts (2k-1)$ & $3 \ts 3$ & $n=2^k-1$ & $n=2^k$ & otherwise\\\hline
($E_1$) & No & No & Yes & No & Yes & No &?\\\hline
($E_2$) & Yes & ? & Yes & No & Yes & ? &?\\\hline
\end{tabular}
\vspace{3mm}
\caption{This table shows validities of the statements ($E_1$) and ($E_2$) in various cases. Especially, the statements for \lq otherwise\rq\ cases, including the five qubit case, are completely untouched.}
\end{center}
\end{table}

The system of equations (\ref{eq}) with complex unknowns and their conjugates is essentially a system of real equations due to the conjugation of complex numbers, and it makes the problem delicate. We could not give a definite answer even for the under-determined case when the number of equations is strictly less than the number of equations. So, we ask:

\begin{question}
Is it possible to remove the rank condition in Theorem \ref{main} (iii) ?
\end{question}

Considering Example \ref{ex_va}, this is a part of the more fundamental question, which was conjectured affirmatively in \cite{kye-prod-vec} for the bi-partite case of $n=2$. The rank condition is also redundant for the three and four qubit cases as shown in Section 2.

\begin{question}
Is the converse of Theorem \ref{algcond} true? More precisely, can one find $k_i, d_j$ and $\sigma_{i,j}$ such that $P^{\bk}(\alpha)=0$ in $\ZZ[\alpha]/(\alpha_j^{d_j})$ and that a nonzero solution of (\ref{eq}) exists for every subspace $D_i$ with $\dim D_i^{\perp}=k_i$?
\end{question}

We found unexpected relations between the existence of nonzero solutions and the permanents of $(+1,-1)$-matrices. It is obvious that the permanent is invariant under taking the transpose. Therefore, it is tempting to add the operation of transpose, for the definition of equivalence for $(+1,-1)$-matrices. But, we could not determine if the solvability of \eqref{eq} is invariant under transpose.

\begin{question}
Is the existence of nonzero solutions for (\ref{eq}) with an associated matrix $\Sigma$ equivalent to that with the associated matrix $\Sigma^\ttt$?
\end{question}

The next obvious question is to classify $(+1,-1)$-matrices up to the equivalence relation. This must be very hard in general, because it involves the word problem.

\begin{question}
Find a complete set of invariants to distinguish $n\times n$ $(+1,-1)$-matrices with vanishing permanents, up to the equivalence relation.
\end{question}

We could answer this question for $n\le 4$. We found five inequivalent $4\times 4$ $(+1,-1)$-matrices with vanishing permanents. But, we could not decide if there exists a system of equations without nonzero solutions with these associated matrices, except for the case of $\Sigma_2^\ttt$.

\begin{question}
For $\Sigma=\Sigma_1,\Sigma_1^\ttt,\Sigma_2,\Sigma_3$, is it possible to construct equation (\ref{eq}) with the associated matrix $\Sigma$ which has no nonzero solution?
\end{question}

One of our main motivations for this study was to understand the inequality (\ref{edge-ine}) for PPT entangled edge states, as it was proposed in \cite{abls} for the three qubit case. It is an interesting problem to fill up Table 1.

\end{document}